\newcommand{\N}{{\mathbb N}}
\newcommand{\R}{{\mathbb R}}
\newcommand{\T}{\mathbb{T}}
\newcommand{\lifetime}{{\cal T}}
\newcommand{\J}{{\cal J}}
\newcommand{\G}{{\cal G}}
\newcommand{\tdbroadcast}{{\textsc{TDB}}\xspace}
\newcommand{\foremost}{{\textsc{TDB}}\ensuremath{[foremost]}\xspace}
\newcommand{\fastest}{{\textsc{TDB}}\ensuremath{[fastest]}\xspace}
\newcommand{\shortest}{{\textsc{TDB}}\ensuremath{[shortest]}\xspace}
\newcommand{\RG}{\ensuremath{{\cal R}}\xspace}
\newcommand{\BG}{\ensuremath{{\cal B}}\xspace}
\newcommand{\PG}{\ensuremath{{\cal P}}\xspace}
\newcommand{\PP}{\ensuremath{\mathscr{P}}\xspace}
\newtheorem{prop}[theorem]{Property}
\newcounter{smallitemizec}
\DeclareFixedFont{\petit}{\encodingdefault}%
{\familydefault}{\seriesdefault}{\shapedefault}{10pt}
\def\ps@pprintTitle{}
\newcommand{\mylno}{%
\ifthenelse{\equal{\arabic{ALC@rem}}{0}}
{{\footnotesize \arabic{ALC@line}:~~\quad}}{}%
}
\newcommand{\codetitle}[1]{\medskip\STATE \hspace{-12pt}\underline{#1:}\vspace{2pt}}
\newenvironment{code}[1][0]%
{\begin{algorithmic}[#1]\let\ALC@lno=\mylno}
{\end{algorithmic}}
\newenvironment{proof-idea}{\noindent{\it Idea of the proof.}\hspace*{.5em}}{\bigskip}
\newenvironment{proof-sketch}{\noindent{\it Proof sketch.}\hspace*{.5em}}{\bigskip}
\begin{document}

\title{\large Shortest, Fastest, and Foremost Broadcast in Dynamic Networks
 \thanks{Preliminary results were presented at the  6th IFIP International Conference on Theoretical Computer Science. This work has been supported in part by the ARC (Australia), NSERC (Canada) and Dr. Flocchini's University Research Chair.}\vspace{-10pt}
}

\author{
  {\normalsize
 Arnaud Casteigts$^1$,
 Paola Flocchini$^2$,
 Bernard Mans$^3$, and
 Nicola Santoro$^4$\vspace{10pt}\\
}
 $^1$ University of Bordeaux, France\\
 $^2$ University of Ottawa, Canada\\
$^3$ Macquarie University, Sydney, Australia\\
$^4$ Carleton University, Ottawa, Canada\\
}

\maketitle

\begin{abstract}
Highly dynamic networks rarely offer end-to-end connectivity at a given time. 
Yet, connectivity in these networks can be established over time and space, based on temporal analogues of multi-hop paths (also called {\em journeys}). 
Attempting to optimize the selection of the journeys in these networks naturally leads to the study of three cases: shortest (minimum hop), fastest (minimum duration), and foremost (earliest arrival) journeys. 
Efficient centralized algorithms exists to compute all cases, when the full knowledge of the network evolution is given.

In this paper, we study the {\em distributed} counterparts of these problems, i.e. shortest, fastest, and foremost broadcast with termination detection (TDB), with minimal knowledge on the topology.
 We show that the feasibility of each of these problems requires distinct features on the evolution, through identifying three classes of dynamic graphs wherein the problems become gradually feasible: 
graphs in which the re-appearance of edges is {\em recurrent} (class \RG), {\em bounded-recurrent} (\BG), or {\em periodic} (\PG), together with specific knowledge that are respectively $n$ (the number of nodes), $\Delta$ (a bound on the recurrence time), and $p$ (the period). In these classes it is not required that all pairs of nodes get in contact -- only that the overall  {\em footprint} of the graph is connected over time.

Our results, together with the strict inclusion between $\PG$, $\BG$, and $\RG$, implies a feasibility order among the three variants of the problem, i.e. $\foremost$ requires weaker assumptions on the topology dynamics than $\shortest$, which itself requires less than $\fastest$. Reversely, these differences in feasibility imply that the computational powers of $\RG_n$, $\BG_\Delta$, and $\PG_p$ also form a strict hierarchy.

\end{abstract}

{\bf Keywords:} dynamic networks, distributed algorithm, time-varying graphs, delay-tolerant broadcast, recurrent edges.

\section{Introduction}


{\em Dynamic networks} are widely addressed in distributed computing. Contexts of interest are as varied as fault-tolerance, interaction scheduling, dynamic membership, planned mobility, or unpredictable mobility. 
The recent emergence of scenarios where entities are truly mobile and can communicate without infrastructure (e.g. vehicles, satellites, robots, or pedestrian smartphones) brought to the fore the most versatile of these environments. In these {\em highly} dynamic networks, changes are not anomalies but rather integral part of the nature of the system. 


The need to categorize and understand highly dynamic networks led the engineering community to design a variety of {\em mobility models}, each of which captures a particular context by means of rules that determine how the nodes move and communicate (see e.g.~\cite{HFB09}). A popular example includes the well-known {\em random waypoint} model~\cite{BRS03}. The main interest of these models is to be able to reproduce experiments and compare different solutions on a relatively fair basis, thereby providing a common ground to solve practical challenges in highly dynamic networks, e.g.~routing and broadcasting~\cite{BGJL06,GK07,JMR10,JFP04,LW09b,Zha06,ZAZ04}.


In the same way as mobility models enable to federate practical investigations in highly dynamic networks, {\em logical properties} on the graph dynamics, that is, {\em classes of dynamic graphs}, have the potential to guide a more formal exploration of their analytical aspects. A number of special classes were recently identified, for instance graphs in which the nodes interact infinitely often ({\it e.g.,} uniform random scheduler for {\it population protocols}~\cite{AAD+06,AAER07,CMS09}); graphs whose dynamics is unrestricted but remains connected at any instant~\cite{DPR+13,OW05}; graphs in which there exists a stable connected spanning subgraph in any T-time window ({\it a.k.a. T-interval connectivity})~\cite{HK12,KLO10}; graphs whose edges appear or disappear with given probabilities~\cite{BCF09,CleMMPS08,CMPS11,PSSS11}; graphs that have a stable root component~\cite{BRS12}; graphs whose schedule is periodic~\cite{CFMS14,FKMS12a,FMS13,IW11} or guarantees minimal reachability properties~\cite{CCF09}. These classes (among others) were 
characterized within a common formal framework  and organized into a hierarchy in~\cite{CFQS12}.


In this paper we are interested in studying specific relationship between some of these classes, namely three subclasses of those networks called {\em  delay-tolerant networks} (DTNs), in which instant connectivity is never guaranteed, but still connectivity can be achieved over time and space (see e.g.~\cite{AE84}). These classes are:
\begin{itemize}
  \item Class \RG of all graphs whose edges cannot disappear forever ({\em recurrent edges}). That is, if an edge appears once and disappears, then it will eventually re-appear at some unknown (but finite) date. 
  It is  not required that all pairs of nodes share an edge, but only that the {\em footprint} of all edges forms a connected graph (otherwise, even temporal connectivity is not guaranteed). This class corresponds to Class~6 in~\cite{CFQS12}.
\item Class  $\cal{B}$ (for {\em bounded-recurrent} edges) consisting of those graphs with recurrent edges in which the recurrence time cannot exceed a given duration $\Delta$. And again, the footprint is connected. This class corresponds to Class~7 in~\cite{CFQS12}.
\item Class $\cal{P}$ (for {\em periodic} edges) consisting of those graphs in which all topological events (appearance or disappearance) repeat identically modulo some period $p$. And again, the footprint is connected. This class corresponds to Class~8 in~\cite{CFQS12}.
\end{itemize}


As far as {\em inclusion} is concerned, 
it clearly holds that $\cal{P} \subset \cal{B} \subset \cal{R}$, 
but what about the {\em computational relationship} between these classes? 
Considering different types of knowledge, namely the number $n$ of nodes in the network, a bound $\Delta$ on the recurrence time, and (any multiple of) the period $p$, we look at the relationship between
$\PP(\RG_n)$, $\PP(\BG_\Delta)$, and $\PP(\PG_p)$, 
where $\PP({\cal C}_K)$ is the set of problems one can solve in class ${\cal C}$ with knowledge $K$. 

The investigation is carried out by studying a fundamental problem in distributed computing: 
{\em broadcast} with termination detection at the emitter (or \tdbroadcast);
this problem  is also known in the literature  with
different names (Echo, Propagation of Information with Feedback, etc.) or
context (synchronizers, etc.).
It can have at least three distinct definitions in highly dynamic networks:
\foremost, in which the date of delivery is minimized at every node;
\shortest, where the number of hops used by the broadcast is minimized relative to every node; and 
\fastest, where the overall duration of the broadcast is minimized (however late the departure be).
These three metrics were considered
  in the seminal work by Bui-Xuan, Ferreira, and Jarry \cite{BFJ03} 
where the authors solved the offline problem  of computing   all shortest, fastest, and foremost journeys from a given node, 
 given a complete schedule of the network.

\subsection*{Main contributions}

In this paper we examine the feasibility and reusability of the solution (and to some extent, the complexity)
 of \foremost, \shortest, and \fastest in \RG, \BG, \PG with knowledge $\emptyset$, $n$,  or $\Delta$.
  We additionally draw some observations from existing  results in~\PG with knowledge $p$~\cite{CFMS14}, that complete our general picture of feasibility and reusability of broadcast in the three classes. 
 Here is a short summary of some of the contributionss.

\paragraph{Feasibility:} 
We first show that none of these problems are solvable  in any of the classes 
unless additional knowledge is considered. 
We then establish 
several results, both positive and negative, on the feasibility  
of \foremost, \shortest, and \fastest in \RG, \BG, and \PG. In particular, we 
prove
that knowing $n$ 
makes it possible to solve \foremost in \RG, 
but this is not sufficient to solve \shortest nor \fastest, even in \BG.
\shortest becomes in turn feasible in \BG if $\Delta$ is known, 
but this is not sufficient to solve \fastest;
this later problem becomes solvable in \PG knowing $p$~\cite{CFMS14}. 
These results allow us to show that 
\begin{equation}
\PP(\RG_n)\subsetneq\PP(\BG_\Delta)\subsetneq\PP(\PG_p)
\end{equation}

\noindent 
where    $\PP({\cal C}_K)$ denotes the set of problems solvable in every ${\cal G}\in{\cal C}$ with knowledge $K$.
In other words, the computational relationships between these three contexts form a {\em strict} hierarchy.


In the universe  
${\cal U} = \{ \RG_n, \BG_n, \BG_\Delta,\BG_{\{n,\Delta\}},  \PG_n,   \PG_\Delta,   \PG_{\{n,\Delta\}},  \PG_p \}$ 
of  the classes of dynamic networks with  knowledge considered here, let
 $P_1 \preceq  P_2$ denote the fact that $P_1$ is ``no more difficult" than  $P_2$, that is,
  if $P_2$ is solvable in ${\cal G}_K\in{\cal U}$ so is $P_1$; 
 and let $P_1 \prec  P_2$ denote the fact that $P_1$ is ``less difficult" than  $P_2$, that is, 
 $P_1 \preceq  P_2$ and there exists ${\cal G}_K\in{\cal U}$ in which $P_1$ is solvable but $P_2$ is not.
Our results show the existence of a {\em strict} hierarchy between these problems  with respect to {\em feasibility}:
 
 \begin{equation}
  \label{eq:relationship}
  \foremost \prec  \shortest \prec    \fastest
\end{equation}

\noindent   These  results   are summarized in Table \ref{tab:results-feasibility}.

 \newcommand{\ov}[1]{$\overline{\mbox{#1}}$}
 \def\arraystretch{1.3}
 
\begin{table}[h]
\label{tab:results-feasibility}
\centering
\begin{tabular}{|c|@{}c@{}|@{\,}c@{}|@{}c@{}|c@{}|}
     \hline
     \backslashbox{Class\hspace*{-30pt}}{Knowledge\hspace*{-5pt}}&\large $\emptyset$&\large $n$&\large $\Delta$&\large $p$\\\hline
     $\cal{R}$&\multirow{3}{*}{
        \begin{minipage}[c]{1.7cm}
         $\neg$ Foremost\\$\neg$ Shortest\\$\neg$ Fastest
       \end{minipage}}&\multirow{3}{*}{\begin{minipage}[c]{1.6cm}
         Foremost\\$\neg$ Shortest\\$\neg$ Fastest
       \end{minipage}}&$n/a$&\multirow{2}{*}{$n/a$}\\\cline{1-1}\cline{4-4}
     $\cal{B}$&&&\multirow{2}{*}{
       \begin{minipage}[c]{2.7cm}
         Foremost, Shortest\\
         $\neg$ Fastest
       \end{minipage}}&\\\cline{1-1}\cline{5-5}
     $\cal{P}$&&&&\hspace{-4pt}Foremost, Shortest, Fastest$^*$\\\hline
   \end{tabular}
  \def\arraystretch{1}
 \caption{
Feasibility of broadcast with 
termination detection, depending on the class of dynamic networks ($\cal R,B,P$) 
and of  knowledge ($\emptyset,n,\Delta,p$).  Unfeasibility is denoted by  $``\neg"$.
The feasibility of Fastest in ${\cal P}$ with knowledge $p$ ($ ^*$ in table) is from~\protect\cite{CFMS14}.}
\end{table}

\paragraph{Reusability:} 
Regarding the possibility to reuse a solution, that is, a same broadcast tree, over several broadcasts,
we establish several results; in particular we show the intriguing fact that reusability in \shortest is easier than
that of \foremost. Precisely, when \shortest becomes feasible in \BG,
it enables at once reusability of the broadcast trees, 
whereas \foremost, although it was already feasible in \RG, 
does not enable reusability until in \PG~\cite{CFMS14}.
 
For reusability, let    relations $\leq $ and $<$  
 be the analogous   of $\preceq$ and $\prec$, respectively; and let 
  $P_1 \equiv  P_2$ denote that both  $P_1 \leq  P_2$  and $P_2 \leq P_1$.
Our results imply that:

\begin{equation}
  \shortest < \foremost \equiv  \fastest 
\end{equation}


\paragraph{Complexity:} Although complexity is not the main focus here, 
we characterize the time complexity and message complexity
of our algorithms and observe some interesting facts. 
For instance, the message complexity of our algorithm for \foremost
is lower knowing $\Delta$ than knowing $n$, 
and even lower if both are known. These results are summarized in Table~2. Note that \tdbroadcast involves two processes: the actual dissemination of {\em information messages}, and the exchange of typically smaller {\em control messages} (e.g. for termination detection), both of which are separately analyzed. 
Regarding time complexity, we observe that for all algorithms but those which terminate implicitly, the termination detection phase takes the same order of time as the dissemination phase. Thus, Table~2 does not distinguish both phases.

 \begin{table}[h]
   \begin{center}
\label{tab:results-complexity}
 \begin{tabular}{|@{\,}c@{\,}|@{~}c@{~}|@{~}c@{~}|@{~}c@{~}|@{~}c@{~}|@{\,}c@{~}|@{~}|@{~}c@{~}|@{~}c@{~}|}
\hline
Metric&Class&Knowl.&Time&Info. msgs&Control msgs&Info. msgs&Control msgs\\
~&~&~&~&($1^{st}$ run)&($1^{st}$ run)&(next runs)&(next runs)\\\hline
  Foremost&$\cal R$&     $n$&\small unbounded&$O(m)$&$O(n^2)$&$O(m)$ &$O(n)$\\
         ~&$\cal B$&     $n$&$O(n\Delta)$&$O(m)$&$O(n^2)$&$O(m)$&$O(n)$\\
         ~&       ~&        $\Delta$&$O(n\Delta)$&$O(m)$&$O(n)$&$O(m)$&0\\
         ~&       ~&   $n \& \Delta$&$O(n\Delta)^*$&$O(m)$&0&$O(m)$&0\\
         \hline
  Shortest&$\cal B$&        $\Delta$&$O(n\Delta)$&$O(m)$&$O(n)$: $2n-2$&$O(n)$&0\\
         \multirow{2}{*}{{\it either of} {\Large \{}}&       ~&   $n \& \Delta$&$O(n\Delta)$&$O(m)$&$O(n)$: $n-1$&$O(n)$&0\\
         &       ~&   $n \& \Delta$&$O(n\Delta)^*$&$O(m)$&0&$O(m)$&0\\
  \hline 
\end{tabular}
\end{center}
\caption{Complexity of \tdbroadcast in different classes of dynamic networks with associated knowledge. {\it (The $^*$ indicates that the emitter terminates implicitely, even in the first run.)}
}
\end{table}






 \section{Model and Basic Properties}

\subsection{Definitions and Terminology}
\label{sec:defs}

Consider a system composed of a finite set of $n$ entities $V$ (or nodes) that interact with each other over a (possibly infinite) time span 
$\lifetime \subseteq \T$ called {\em lifetime} of the system, where $\T$ is the temporal domain (typically,  $\N$ or $ \R^+$ for discrete and continuous-time systems, respectively). In this paper we consider a continuous-time setting with $\T=\R^+$.

Following~\cite{CFQS12}, we describe the network as a {\em
  time-varying graph} ({\em TVG}, for short)
$\G=(V,E,\lifetime,\rho,\zeta)$, where $E \subseteq V\times V$ is a
set of $m$ (possibly {\em intermittent}) undirected edges such that
$(u,v) \in E \Leftrightarrow$ $u$ and $v$ have at least one contact
over $\lifetime$; $\rho : E \times \lifetime \rightarrow \{0,1\}$
({\em presence function}) indicates whether a given edge is {\em
  present} at a given time; and $\zeta: E \times \lifetime \to \T$
({\em latency function}), indicates the time it takes to cross a given
edge ({\it i.e.,} send a message) if starting at a given time.
 In this paper we assume $\zeta$ to be constant over all edges and dates, and call
$\zeta$ the {\em crossing delay};  thus we use the shorthand notations
$\G=(V,E,\lifetime,\rho)$. We also assume 
 that, for every edge $e\in E$, the union of dates when
$\rho(e) = 1$, is a set of disjoint closed time intervals of length at least $\zeta$.
 Finally, the (static) graph formed by $V$
and $E$, taken alone, is the {\em footprint} of $\G$ (also called {\em
  underlying graph} or {\em interaction graph}). In this work, we do
not consider the footprint to be a complete graph in general (some
nodes may never interact), but we consider it to be connected.

The TVG formalism essentially encompasses that of {\em evolving graphs}~\cite{Fer04}, where $\G$ is represented as a sequence of 
graphs $G_1, G_2, ..., G_i, ...$ each providing a {\em snapshot} of the system at different times (which correspond either to discrete steps or given dates).
In comparison, TVGs offer an {\em interaction-centric} view of the network evolution, where the evolution of each edge can be considered irrespective of the global time sequence.

A graph $G$ is said to be {\em recurrent} if none of the edges $E$ can
disappear forever; that is, for any date $t$ and edge $e$,
$\rho(e,t)=0 \implies \exists t'>t : \rho(e,t')=1$. Strictly speaking,
we do not say that an edge must appear infinitely often because here an edge might also remain present
continuously (and this would satisfy the property). Let $\cal{R}$
denote the class of recurrent TVGs whose footprint $G=(V,E)$ is {\em
  connected}.

A graph $\G\in \RG$ is said to be {\em time-bounded recurrent} (or
simply {\em bounded}), if there exists a constant $\Delta$ such that,
for every edge $e \in E$, the time between two successive appearances
of $e$ is at most $\Delta$. We denote by $\cal{B} \subset \cal{R}$ the
class of time-bounded recurrent TVGs whose footprint is connected.

A graph is said to be {\em periodic } if there exists a constant $p$ such that $\forall e\in E$, $\rho(e,t)=\rho(e,t+kp)$ for every positive  integer $k$; the smallest such $p$ is called the {\em period} of the graph.
 We denote by $\PG\subset\BG$ the class of periodic TVGs whose footprint is connected.

Given a TVG $\G=(V,E,\lifetime,\rho)$, we consider that $G=(V,E)$ is always simple (no self-loop nor multiple edges) 
and that
nodes possess unique identifiers.

The set of edges being incident to a node $u$ at time $t$ is noted $I_t(u)$ (or simply $I_t$, when the node is implicit). 

When an edge $e=(x,y)$ appears, the entities $x$ and $y$ can communicate. 
The time $\zeta$ necessary to transmit a message ({\em crossing delay}) is known to the nodes. 
The   duration of edge presence  is assumed to be  at least $\zeta$ ({\it i.e.,} long enough to send a message). 
Algorithmically, this allows  the following observations:

\begin{prop}~\\
1. If a message is sent just after an edge has appeared, the message is guaranteed to be successfully transmitted.\\
2. If the recurrence of an edge is bounded by some $\Delta$, then this edge cannot disappear for more than $\Delta-\zeta$.
\label{prop:sendappear}
\end{prop}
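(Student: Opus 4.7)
The proof plan is to unpack the two hypotheses on edge intervals stated just above the property: that the presence of every edge is a union of closed intervals of length at least $\zeta$, and (for part 2) that successive appearances of an edge are at most $\Delta$ apart in class $\BG$.

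For part 1, the plan is to fix an edge $e$ that appears at some time $t$, meaning $t$ is the left endpoint of one of the closed presence intervals of $e$. By the definition of $\rho$ given in Section 2.1, this interval has length at least $\zeta$, so $\rho(e,t')=1$ for every $t'\in[t,t+\zeta]$. A message injected at time $t$ arrives at time $t+\zeta$ by definition of the (constant) crossing delay $\zeta$, and throughout that interval the edge is present, so the transmission succeeds. The only thing to be careful about is the phrase ``just after'' in the statement: I would interpret it formally as ``at time $t+\epsilon$ for any $\epsilon$ small enough that $t+\epsilon+\zeta$ still lies inside the same presence interval'', which is possible because the interval has length at least $\zeta>0$.

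For part 2, let $t_1$ and $t_2$ be the starting times of two successive presence intervals of some edge $e\in E$. By the bounded-recurrence hypothesis, $t_2-t_1\le\Delta$. By the minimum-duration hypothesis, $e$ is present on $[t_1,t_1+\zeta]$ at least, so the disappearance of $e$ between these two appearances cannot begin before $t_1+\zeta$ and must end by $t_2$. Hence the absence has duration at most
\[
t_2-(t_1+\zeta)\;\le\;\Delta-\zeta,
\]
as claimed.

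I do not anticipate an obstacle: both statements are essentially syntactic consequences of the assumptions that edge-presence intervals have length at least $\zeta$ and that gaps in $\BG$ are bounded by $\Delta$. The only care needed is to interpret ``appearance'' consistently as the left endpoint of a maximal presence interval, so that the offset of $\zeta$ in part 2 is correctly subtracted exactly once.
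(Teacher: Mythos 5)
Your proof is correct and matches the paper's reasoning: the paper states this property as an immediate observation following from the model assumptions (presence intervals are closed and of length at least $\zeta$; successive appearances in $\BG$ are at most $\Delta$ apart) and gives no separate proof, and your argument is exactly the intended unpacking of those assumptions. The only minor point is that ``just after'' should be read as ``at the instant of appearance'' (as in the \textit{onAppearance} handlers), since a presence interval of length exactly $\zeta$ leaves no slack for a positive $\epsilon$.
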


 The appearance and disappearance of edges are instantly detected by the two adjacent nodes (they are notified of such an event without delay). If a message is sent less than $\zeta$ before the disappearance of an edge, it is lost. However, since the disappearance of an edge is detected instantaneously, and the crossing delay $\zeta$ is known, the sending node can locally determine whether the message was successfully delivered. We thus authorize the special primitive $send\_retry$ as a facility to specify that if the message is lost, then it is automatically re-sent upon next appearance of the edge, and this sending is necessarily successful (Property~\ref{prop:sendappear}). Note that nothing precludes this primitive to be called while the corresponding edge is even absent (this actually simplifies the expression of some algorithms).

A sequence of couples $\J=\{(e_1,t_1), (e_2, t_2),...,(e_k, t_k)\}$,
with $e_i \in E$ and $t_i \in \lifetime$ for all $i$, is called a {\em journey} in $\G$ iff $\{e_1, e_2,...\}$ is a walk in $G$ and  for all $t_i$,
   $t_{i+1} \geq   t_i+\zeta$ and 
$\rho(e_i)_{t'}=1$ for all $  t_i  \leq  t'  \leq  t_i+\zeta$.
 We denote by $departure(\J)$, and $arrival(\J)$, the starting date $t_1$ and last date $t_k +\zeta$ of $\J$, respectively.

Journeys can be thought of as {\em paths over time} from a source node to a destination node (if the journey is finite). Let us denote by $\J^*_\G$ the set of all finite journeys in a graph $\G$. We will say that node $u$ can reach node $v$ in $\G$, and note $\exists \J_{(u,v)} \in \J^*_\G$ (or simply $u \leadsto v$, if $\G$ is clear from the context), if there exists at least one possible journey from $u$ to $v$ in $\G$. Note that the notion of journey is asymmetrical ($u \leadsto v \nLeftrightarrow v \leadsto u$), regardless of whether edges are directed or undirected.

Because journeys take place {\em over time}, they have both a topological length and a temporal length. 
The {\em topological length} of $\J$ is the number $|\J|_h=k$ of couples in $\J$
(i.e., number of {\em hops}), and its {\em temporal length} is its duration $|\J|_t  =  arrival(\J) - departure(\J) = t_k - t_1 +\zeta$. 
This yields two distinct definitions of distance in a graph~$\G$:

\begin{list}{\labelitemi}{\leftmargin=.5em}
  \item The {\em topological distance} from a node $u$ to a node $v$ at time $t$, noted 
  $d_{u,t}(v)$, is defined as $Min\{|\J|_h:\J \in \J^*_{(u,v)} \wedge departure(\J) \ge t\}$. 
  For a given date $t$, a journey whose departure is $t'\ge t$ and topological length is equal to $d_{u,t}(v)$ is called {\em shortest};
  
  \item The {\em temporal distance} from $u$ to $v$ at time $t$, denoted by $\hat{d}_{u,t}(v)$ is defined as $Min\{arrival(\J):\J \in \J^*_{(u,v)} \wedge departure(\J)\ge t\}-t$. Given a date $t$, a journey whose departure is $t'\ge t$ and arrival is $t+{\hat d}_{u,t}(v)$ is called {\em foremost};  
 if the set  ${ \hat d}_{u,t'}(v) $ has a minimum,  say $d'$,  any journey whose temporal distance is $d'$ is called the {\em fastest}.
\end{list}

Informally, a {\em foremost} journey is one that minimizes the date of arrival at destination; a {\em shortest} journey is one that uses the least number of hops; and a {\em fastest} journey is one that minimizes the time spent between departure and arrival (however late the departure be)~\cite{BFJ03}.


\subsection{Problems}


We consider the {\em distributed}  problem of  performing {\em broadcast with termination detection} at the emitter, 
or \tdbroadcast, according to the shortest, fastest, or foremost metrics.


\tdbroadcast in general requires all nodes to receive a message 
with some information initially held by a single node $x$, 
called {\em source} or {\em emitter}, 
and the source to enter a terminal state
after all nodes have received the information, within finite time.
 A protocol solves \tdbroadcast in a graph $\G$ if  it solves it for any source $x\in V$ and time $t\in \lifetime$.
 We say that it solves \tdbroadcast in a class $\cal{C}$ if it solves  \tdbroadcast  for any $\G \in \cal{C}$. We are interested in three variations of this problem, following the optimality metrics defined above: 
\begin{itemize}
\item \foremost, where {\em each} node receives the information at the {\em earliest} possible date following its {\em creation} at the emitter;
\item \shortest, where each node receives the information within a minimal number of hops from the emitter;
\item \fastest, where the overall duration between first global emission and last global reception is minimized. 
\end{itemize}
For each of these problems, we require that the emitter detects termination, however this detection is not subject to the same optimality constraint (it just has to be finite).
\tdbroadcast thus involves two processes: the actual 
dissemination of {\em information messages}, and the exchange of typically smaller {\em control messages} used for termination detection, both being considered separately in this paper.

Finally, we call {\em broadcast tree} the hierarchy of nodes and edges
along which the broadcast takes place, without consideration to the
dates when the edges are used ({\it i.e.} the footprint of the part of
the TVG that is used). A broadcast tree is said to be {\em reusable}
if the same hierarchy of nodes and edges can be purposedly followed to
perform a subsequent optimal ({\it i.e.} foremost, shortest, or
fastest) broadcast.

\section{Basic Results and Limitations}
\label{sec:basic}

Let us first state a general property on the computational
relationship between the main three contexts of interest, namely
knowing $n$ in \RG (noted $\RG_n$), knowing $\Delta$ in \BG (noted
$\BG_\Delta$), and knowing $p$ in \PG (noted $\PG_p$). Let $\PP({\cal
  C}_K)$ denote the set of problems solvable in every ${\cal
  G}\in{\cal C}$ with knowledge $K$.

\begin{theorem}
  $\PP(\RG_n)\subseteq\PP(\BG_\Delta)\subseteq\PP(\PG_p)$
  \label{th:inclusion}
\end{theorem}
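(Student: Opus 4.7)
The plan is to establish the two inclusions separately, each by a simulation argument exploiting the fact that stronger structural hypotheses and stronger knowledge can emulate weaker ones.

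For $\PP(\RG_n)\subseteq\PP(\BG_\Delta)$, I would first note that $\BG\subseteq\RG$, so any algorithm $A$ that solves a problem $P$ in $\RG$ given $n$ is automatically correct on graphs in $\BG$ once $n$ is supplied. The task therefore reduces to showing that, in $\BG$ knowing $\Delta$, the value $n$ can be computed locally by a preliminary terminating procedure. My plan would be a flooding of identifiers: each node $u$ maintains a set $S_u$ of known IDs (initially $\{u\}$) and, on every appearance of any incident edge, uses $send\_retry$ to forward $S_u$ to the other endpoint. Because the footprint is connected and every edge reappears within $\Delta$, the full set of IDs propagates in bounded time along the footprint, and once stabilized has cardinality $n$. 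Termination can then be asserted by a standard argument based on stable intervals of length derived from $\Delta$; once this preliminary phase ends, $A$ is executed with the computed $n$.

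For $\PP(\BG_\Delta)\subseteq\PP(\PG_p)$, the plan is direct. By the definition of a periodic graph, $\rho(e,t)=\rho(e,t+p)$ for every edge $e$, so no edge can be absent for longer than $p$ without reappearing; hence $p$ itself is a valid upper bound on the recurrence time. Any algorithm $A'$ solving $P$ in $\BG_\Delta$ may thus be run verbatim on a periodic graph with the parameter $\Delta:=p$, and its correctness is inherited from $\PG\subseteq\BG$.

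The main obstacle lies in the first inclusion: one must argue carefully that $n$ is indeed computable with termination using only $\Delta$, without presupposing knowledge of $n$ or of the footprint diameter. This reduces to a routine termination-detection argument on a flooding protocol, in which bounded recurrence supplies the timer that would otherwise require $n$. The second inclusion is essentially an observation and should go through without difficulty.
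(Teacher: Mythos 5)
Your second inclusion is exactly the paper's argument ($\PG\subset\BG$ and the period $p$ is itself a valid recurrence bound $\Delta$), and your overall strategy for the first inclusion --- reduce to computing $n$ in $\BG$ using $\Delta$, then run the $\RG_n$ algorithm --- also matches the paper. The gap is in how you propose to compute $n$. Flooding bare identifier sets and stopping after a ``stable interval of length derived from $\Delta$'' does not work: a node cannot conclude $S_u=V$ from the fact that $S_u$ has not changed for $c\Delta$ time, for any constant $c$ independent of $n$. Concretely, take the footprint to be a path $u=x_0,x_1,\dots,x_k$, let every edge except $(x_{k-1},x_k)$ be present near time $0$ so that $u$ acquires the identifiers of $x_1,\dots,x_{k-1}$ within $O(k\zeta)$ time, and let each edge thereafter reappear only once per (almost) $\Delta$, with $(x_{k-1},x_k)$ first appearing around time $\Delta$. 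The identifier of $x_k$ must then traverse $k$ hops, each costing up to $\Delta$, so it reaches $u$ only around time $k\Delta$, leaving a silent gap of roughly $k\Delta$ at $u$ --- and $k$ is unknown and unbounded. Any timeout expressed in terms of $\Delta$ alone is defeated, precisely because the quantity needed to calibrate it ($n$ or the diameter) is what you are trying to compute. A second, smaller issue: triggering sends only ``on appearance'' misses edges that remain continuously present, so grown sets may never be re-forwarded.

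The paper avoids this circularity with a depth-first token circulation started at a single node $u$: upon receiving the token, a node waits $\Delta$ to discover all of its footprint neighbours (every incident edge must appear within $\Delta$), then forwards the token via $send\_retry$ to an unvisited neighbour; the token carries a counter, and termination is simply the token's return to $u$ with no unvisited neighbour left --- a local condition requiring no global stabilization timeout. Your flooding idea can be repaired in the same spirit by flooding pairs consisting of an identifier together with that node's complete neighbour list (available after a $\Delta$-wait at each node), and having $u$ terminate when its collected knowledge is closed under the neighbour relation; connectivity of the footprint then guarantees the collected set is all of $V$. Either fix closes the gap; as written, the termination step of your preliminary phase is not justified.
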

\begin{proof}
  The right inclusion is straight from the fact that $\BG \subset \PG$ and $p$ is a valid bound $\Delta$ on the recurrence time.
The left inclusion follows from the facts that $\RG \subset \BG$ and $n$ can be inferred in \BG if $\Delta$ is already known. This can be done by performing, from any node (say $u$), a depth-first token circulation that will explore the underlying graph $G$ over time. Having a bounded recurrence time indeed allows every node to learn the list of its neighbors in $G$ within $\Delta$ time (all incident edges must appear within this duration). As the token is circulated to unvisited nodes, these nodes are marked as visited by $u$'s token and the token is incremented. Upon returning to $u$, the token value is $n$.
\end{proof}

\noindent These inclusions will be shown strict later on.

We now establish a negative result that justifies the need for additional knowledge in order to solve \tdbroadcast in any of the considered contexts. In fact we have:
 
\begin{theorem}
\tdbroadcast cannot be solved in  $\cal{P}$ without additional knowledge.
 \label{th:imp1}
\end{theorem}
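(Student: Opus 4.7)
The plan is to give an indistinguishability argument: assume an algorithm $A$ exists that solves \tdbroadcast in \PG with no prior knowledge, and contradict it by constructing two periodic graphs that an emitter $s$ cannot tell apart, yet for which termination in one implies non-broadcast in the other.

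First I would fix any graph $\G=(V,E,\lifetime,\rho)\in\PG$ with period $p$ and a distinguished source $s\in V$, and run $A$ on $\G$ starting at time $0$. Since $A$ is assumed correct, $s$ enters its terminal state at some finite date $t_0$, and by then every node of $V$ must have received the information (otherwise termination would be incorrect in $\G$ itself). Let $u\in V$ be arbitrary.

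Next I would build a second graph $\G'\in\PG$ that looks identical to $\G$ from every node's local point of view up to time $t_0$, but contains an extra uninformed node. Concretely, I would let $\G'$ be obtained from $\G$ by adding a fresh node $v\notin V$, together with a single edge $(u,v)$ whose presence function is the characteristic function of the set $\bigcup_{k\ge 0}[t_0+1+kp',\, t_0+1+kp'+\zeta]$ for some $p'$ that is a common multiple of $p$ and large enough to place the first interval strictly after $t_0$. The resulting TVG is periodic of period $p'$, and its footprint is $(V\cup\{v\},E\cup\{(u,v)\})$, which is connected because $\G$'s footprint is connected by hypothesis. Hence $\G'\in\PG$.

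The key observation is that no edge incident to $v$ is present at any time in $[0,t_0]$, so for every node $w\in V$ the sequence of local events (edge appearances/disappearances, message receptions, elapsed time) that $w$ witnesses during $[0,t_0]$ is exactly the same in the execution of $A$ on $\G'$ as in the execution on $\G$. Since $A$ is deterministic and has no access to any a-priori parameter distinguishing $\G$ from $\G'$, $s$ produces the same trace in both executions and, in particular, enters its terminal state at the same date $t_0$ in $\G'$. But at $t_0$ in $\G'$ the new node $v$ has certainly not received the information, since the only edge by which it can ever receive a message has not yet appeared. This contradicts the correctness of $A$ on $\G'\in\PG$.

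The only real obstacle is to make sure that the added edge does not violate the hypotheses of the model (periodicity, presence intervals of length at least $\zeta$, connected footprint) and that $\G'$ still belongs to \PG; the construction above handles all three points simultaneously. The conclusion extends automatically to the larger classes $\BG$ and $\RG$ because $\PG\subset\BG\subset\RG$, so any algorithm working in those classes would a fortiori work in \PG.
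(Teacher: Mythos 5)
Your proof is correct and follows essentially the same indistinguishability argument as the paper: run the hypothetical algorithm on an arbitrary periodic graph, then attach an extra node via an edge that first appears only after the source has terminated, yielding a second periodic graph whose execution is identical up to the termination date but leaves the new node uninformed. Your version is slightly more explicit than the paper's in verifying that the augmented graph really lies in $\cal{P}$ (choosing the new period as a common multiple of $p$ and giving presence intervals of length $\zeta$), but the underlying idea is identical.
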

\begin{proof}
 By contradiction, let ${\cal A}$ be an algorithm that solves  \tdbroadcast in \PG .
 Consider an arbitrary  $\G=(V,E,\lifetime,\rho)\in \PG$ and $x\in V$. Execute
  ${\cal A}$ in $\G$ starting at time $t_0$ with $x$ as the source.  Let $t_f$  be the
time when
 the source terminates (and thus all nodes have received the information).
 Let
$\G'=(V',E',\lifetime',\rho')\in \PG$ such that
 $V'=V\cup \{v\}$,
 $E'=E\cup\{(u,v)$ for some $u \in V\}$,
for all $ t_0\leq t < t_f$, 
 $\rho' (e,t)=\rho(e,t)$  for all $e\in E$ and $\rho'((u,v),t) = 0$.
 Now, consider $\rho'((u,v),t) = 1$ for some $t > t_f$, and the period of $\G'$ is some $p'>t-t_0$. 
Consider the execution of ${\cal A}$ in $\G'$  starting at time
$t_0$ with $x$ as the source.
 Since  $(u,v)$ does not appear from $t_0$ to $t_f$, the execution of
$ {\cal A}$ at every node in $\G'$ is exactly as at the corresponding node in $\G$.
In particular, node $x$ will have entered a terminal state at time  $t_f$ with node $v$
not having received the information, contradicting the correctness  of ${\cal A}$. 
  \end{proof}
  
 We thus have the following corollary, by inclusion of \PG. 

\begin{corollary}
 \tdbroadcast cannot be solved in \BG nor \RG without any additional knowledge. 
 \label{co:imp2}
\end{corollary}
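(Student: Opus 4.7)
The plan is to exploit the class inclusions $\PG \subset \BG \subset \RG$ established earlier, together with Theorem~\ref{th:imp1}, and conclude by a short contrapositive argument. Since the harder result (impossibility in the richest, most structured class $\PG$) is already in hand, the corollary should amount to little more than a one-line observation about how unsolvability propagates to larger classes.

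Concretely, I would argue as follows. Suppose, for contradiction, that there exists an algorithm ${\cal A}$ solving \tdbroadcast in $\BG$ (respectively, in $\RG$) without any additional knowledge. By definition, ${\cal A}$ must solve \tdbroadcast on every $\G \in \BG$ (respectively, $\G \in \RG$). Since $\PG \subset \BG \subset \RG$, every periodic graph $\G \in \PG$ is in particular an element of $\BG$ and of $\RG$. Hence ${\cal A}$ also solves \tdbroadcast on every $\G \in \PG$, still without additional knowledge. This directly contradicts Theorem~\ref{th:imp1}, which asserts that no such algorithm exists for $\PG$.

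There is essentially no obstacle here: the only thing to be careful about is that the knowledge available to ${\cal A}$ in the larger class must not exceed what is allowed in $\PG$. Since we are assuming no additional knowledge in either case, this is automatic, and the argument goes through without any extra work. The two statements (for $\BG$ and for $\RG$) follow simultaneously from the same chain of inclusions.
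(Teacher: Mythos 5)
Your proposal is correct and is exactly the paper's argument: the paper derives the corollary ``by inclusion of $\PG$'', i.e., any algorithm solving \tdbroadcast on all of $\BG$ or $\RG$ without knowledge would in particular solve it on all of $\PG$, contradicting Theorem~\ref{th:imp1}. Your contrapositive write-up just makes this one-line observation explicit.
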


Hence, additional knowledge of some kind is required to solve
\tdbroadcast in these classes. We consider three types of knowledge,
namely, the number of nodes $n=|V|$, an upper bound $\Delta$ on the
recurrence time (when in $\cal{B}$), or the period $p$ (in \PG).

%

To prove some impossibility results 
on these problems with a given knowledge (Theorems  \ref{th:no-reuse-P-ndelta} and \ref{th:neg-shortest-P}, later in the paper),
we  make use of a specific family of TVGs with the same footprint, and establish some basic limitations they expose.

Consider the graph $G = (V,E)$ with $V=\{u,v,x,y\}$ and 
$E=\{e_1=(u,x), e_2= (u,y), e_3= (x,v), e_4= (y,v)\}$. 
 Consider   the   infinite family   $\{{\cal G}_i\}= \{{\cal G}_0, {\cal G}_1, \ldots  \}$     of periodic  TVGs
 with footprint $G$ where, for each ${\cal G}_i$,   $\zeta=1$ and $\rho$ is as follows, where $[t]_j$ denotes $t$ modulo $j$:
\begin{itemize}
\item    $e_1$ and $e_2$ are present only during the intervals $[t,t+1]$ with $ t  \in \N$ and  $ [t]_4 = 0$;
\item $e_3$  is present only during the   intervals $[t,t+1]$ where $t \in \N$ and  $[t]_4 = 2$;

\item    If $i=0$, 
then $e_4$  is present only during the   intervals $[t,t+1]$ where $t \in \N$ and  $[t]_4 = 3$.

\item If $i>0$, 
then $e_4$   is present only during the   intervals $[t,t+1]$  where $   t \in \N$ and   either    $[t]_4 = 3$
or   $[t]_{4(i+1)} = 4i+1$.

 \end{itemize}

Notice that   ${\cal G}_i\in \PG$ and its period is  $4 (i+1) $.
Also notice that
 in all of these graphs, $n=4$ and the minimum  $\Delta$ is   4.

Let  {\sc Test}($\{{\cal G}_i\}$) be the problem of 
observing the evolution of a graph  chosen by an adversary from
 $\{{\cal G}_i\}$, and deciding  in finite time whether  or not  
 it   is  ${\cal G}_0$. 
 
  \begin{theorem}
 {\sc Test}($\{{\cal G}_i\}$) is undecidable.
\label{undecidable}
 \end{theorem}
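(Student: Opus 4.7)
The plan is a standard indistinguishability argument exploiting the fact that the only way $\G_0$ differs from $\G_i$ (for $i>0$) is in the extra appearances of $e_4$, and the first such extra appearance in $\G_i$ occurs arbitrarily late as $i$ grows.

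First, I would suppose for contradiction that some algorithm $\cal A$ solves {\sc Test}($\{\G_i\}$), and consider an execution of $\cal A$ on $\G_0$ starting at time $0$ (without loss of generality, since $\G_0$ is periodic). By correctness, $\cal A$ must halt at some finite time $T$ and output ``yes, this is $\G_0$.'' The key observation I would then make is that the behaviour of $\cal A$ up to time $T$ is fully determined by the restriction of $\rho$ to $[0,T]$ for each edge of the footprint, since the observer has no other input.

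Next, I would compare $\rho_0$ (the presence function of $\G_0$) with $\rho_i$ (that of $\G_i$ for $i>0$) on the interval $[0,T]$. For $e_1,e_2,e_3$ the definitions coincide verbatim, independent of $i$. For $e_4$, the two functions agree on every appearance dictated by the condition $[t]_4 = 3$; the only additional appearances in $\G_i$ are those satisfying $[t]_{4(i+1)} = 4i+1$, whose smallest non-negative solution is $t = 4i+1$. Choosing any $i$ with $4i+1 > T$ therefore guarantees that $\rho_0$ and $\rho_i$ coincide on $[0,T]$ for every edge.

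Finally I would conclude: running $\cal A$ on this $\G_i$ produces an execution indistinguishable from the one on $\G_0$ up to time $T$, so $\cal A$ halts at time $T$ and outputs ``yes, this is $\G_0$,'' whereas the correct answer is ``no.'' This contradicts the assumed correctness of $\cal A$, establishing undecidability. The argument is entirely conceptual; there is no real obstacle beyond making explicit the two facts that $\cal A$'s output at time $T$ depends only on the observed prefix and that, for every $T$, some $\G_i$ in the family is indistinguishable from $\G_0$ on $[0,T]$.
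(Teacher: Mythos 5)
Your proof is correct and follows essentially the same indistinguishability argument as the paper: both exploit that $\G_0$ and $\G_i$ coincide on $[0,T]$ whenever $4i+1>T$, so any finite-time decision can be fooled. The paper phrases it as an adversary choosing the graph after the decision, while you fix the run on $\G_0$ and transplant it; these are the same argument.
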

\begin{proof}
At any time    $t$,  the evolution of
 $  {\cal G}_0$ from time $0$ to  $t$  is indistinguishable from that of any 
${\cal G}_i$ with  $i >  {{ t -1}\over 4}$. Since any solution algorithm must terminate in finite time, any decision taken 
at that time, say $\hat t$,  can be made incorrect by the adversary by choosing the graph  to be
 any  ${\cal G}_i$ with $i >  {{ \hat t -1}\over 4}$ if the answer was  $  {\cal G}_0$, and  by choosing it  to be $ {\cal G}_0$  otherwise.

\end{proof}

 A  knowledge $K$ about   $\{ {\cal G}_i  \}$ may reduce the choices of the  adversary.
   Indeed, given enough knowledge, the  {\sc Test}   problem can  become decidable.  
   The following lemma identifies   conditions on $K$ for {\sc Test}  to be undecidable.
   
Let  
  $\{ {\cal G}_i  \} \setminus K$ denote the subset of  $\{ {\cal G}_i  \}$ 
  still available to the adversary in spite of  knowledge 
$K$.

\begin{theorem}
If   ${\cal G}_0 \in \{ {\cal G}_i  \} \setminus K$ and  $|\{ {\cal G}_i  \} \setminus K| =\infty$, then
   {\sc Test}$(\{      {\cal G}_i  \}\setminus K) $  is undecidable.
\label{lemma}
\end{theorem}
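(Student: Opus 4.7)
The plan is to adapt the argument of Theorem \ref{undecidable} almost verbatim, the only new ingredient being a pigeonhole-style observation that exploits the infiniteness of $\{{\cal G}_i\}\setminus K$. Let ${\cal A}$ be any candidate algorithm for {\sc Test}$(\{{\cal G}_i\}\setminus K)$, and suppose it is run on the unknown graph starting at time $0$. By correctness, on every admissible input ${\cal A}$ must terminate in finite time; in particular, if the actual graph is ${\cal G}_0$ (which is admissible by hypothesis), ${\cal A}$ outputs its verdict at some finite date $\hat t$.

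The key step is to show that for this $\hat t$, the adversary can still exhibit a graph in $\{{\cal G}_i\}\setminus K$ whose evolution up to time $\hat t$ coincides with that of ${\cal G}_0$. Recall from the proof of Theorem \ref{undecidable} that the evolutions of ${\cal G}_0$ and ${\cal G}_i$ coincide on $[0,\hat t]$ whenever $i > (\hat t - 1)/4$, because the only place where ${\cal G}_i$ differs from ${\cal G}_0$ is the extra appearances of $e_4$ at times whose residue modulo $4(i+1)$ equals $4i+1$, and the first such time exceeds $\hat t$ for $i$ large enough. Since $\{{\cal G}_i\}\setminus K$ is infinite, it cannot be contained in the finite initial segment $\{{\cal G}_0,\ldots,{\cal G}_{\lceil(\hat t - 1)/4\rceil}\}$, so there exists some index $j > (\hat t - 1)/4$ with ${\cal G}_j \in \{{\cal G}_i\}\setminus K$.

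With both ${\cal G}_0$ and such a ${\cal G}_j$ available to the adversary and producing identical observations through time $\hat t$, ${\cal A}$ cannot distinguish between them. The adversary then defeats ${\cal A}$ in the standard way: if ${\cal A}$ at time $\hat t$ declares the graph to be ${\cal G}_0$, the adversary chooses ${\cal G}_j$; otherwise it chooses ${\cal G}_0$. In either case the output is incorrect, contradicting the assumed correctness of ${\cal A}$. I do not anticipate a real obstacle here; the only subtlety is making sure that the hypothesis $|\{{\cal G}_i\}\setminus K|=\infty$ is used at the right point, namely to guarantee the existence of a valid ${\cal G}_j$ with arbitrarily large index, rather than merely the existence of some second graph.
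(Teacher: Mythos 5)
Your proposal is correct and follows essentially the same argument as the paper's own proof: both use the infiniteness of $\{{\cal G}_i\}\setminus K$ to guarantee an index $j>\frac{\hat t-1}{4}$ surviving the knowledge restriction, the indistinguishability of ${\cal G}_j$ from ${\cal G}_0$ on $[0,\hat t]$, and the standard adversary switch at the decision time $\hat t$. Your explicit remark that the hypothesis ${\cal G}_0\in\{{\cal G}_i\}\setminus K$ is also needed (so the adversary may answer with ${\cal G}_0$ itself) matches the paper's implicit use of that assumption.
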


 \begin{proof}
 Since  $|\{ {\cal G}_i  \} \setminus K| =\infty$,  
for any $t$  there is always a TVG with $i> {{ t -1}\over 4}$ in the set  $  \{ {\cal G}_i\} \setminus K$,
and its evolution from time $0$ to  $t$  is indistinguishable from that of  
 $  {\cal G}_0$. Since any solution algorithm must terminate in finite time, 
  any decision taken 
at that time, say $\hat t$, can be made incorrect by the adversary
by
choosing the graph  to be  any  ${\cal G}_j \in \{ {\cal G}_i\}  \setminus K $ with $j >  {{\hat  t -1}\over 4} $ if the answer was  $  {\cal G}_0$, and  by choosing it  
 to be $ {\cal G}_0$  otherwise.\\

\end{proof}

As an immediate consequence, we have that  $n$ and $\Delta$ are not sufficient:

\begin{lemma}{\sc Test}$(\{ {\cal G}_i  \}\setminus \{n,\Delta\})$ is undecidable.
\label{newlemma}
\end{lemma}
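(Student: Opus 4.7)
The plan is to apply the preceding general undecidability criterion (Theorem~\ref{lemma}) directly, after observing that the knowledge $\{n,\Delta\}$ provides no discriminating power against the adversary when the choice is restricted to this particular family.

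First, I would recall the key quantitative feature of the construction $\{\mathcal{G}_i\}$: the text explicitly notes that every graph in the family has $n=4$ and minimum recurrence bound $\Delta=4$. In other words, the values of $n$ and $\Delta$ are invariant across the family and therefore carry no information distinguishing $\mathcal{G}_0$ from any $\mathcal{G}_i$ with $i>0$. Consequently, the set of graphs that remain available to the adversary after revealing $n$ and $\Delta$ is unchanged, i.e.\ $\{\mathcal{G}_i\}\setminus\{n,\Delta\} = \{\mathcal{G}_i\}$.

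Second, I would verify the two hypotheses of Theorem~\ref{lemma}: (i) $\mathcal{G}_0 \in \{\mathcal{G}_i\}\setminus\{n,\Delta\}$, which holds because the set equals the full family; and (ii) $|\{\mathcal{G}_i\}\setminus\{n,\Delta\}|=\infty$, which also holds for the same reason, since $\{\mathcal{G}_i\}$ is by construction an infinite family indexed by $i\in\mathbb{N}$. Applying Theorem~\ref{lemma} to this restricted-but-actually-unrestricted set yields the undecidability of $\textsc{Test}(\{\mathcal{G}_i\}\setminus\{n,\Delta\})$.

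There is essentially no technical obstacle here: the construction of $\{\mathcal{G}_i\}$ was engineered precisely so that $n$ and $\Delta$ are constant across the family, making the adversarial argument of Theorem~\ref{lemma} go through verbatim. The only thing to state carefully is the observation that "knowledge of $n$ and $\Delta$" cannot reduce the adversary's choices, since every element of the family is consistent with those values; this is the single sentence that carries the whole proof.
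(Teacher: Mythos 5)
Your proposal is correct and matches the paper's own argument: the key observation in both is that $n$ and $\Delta$ are constant across the family, so $\{\mathcal{G}_i\}\setminus\{n,\Delta\}=\{\mathcal{G}_i\}$. The only (immaterial) difference is that you then invoke the general criterion of Theorem~\ref{lemma}, whereas the paper cites Theorem~\ref{undecidable} directly; since the reduced set equals the full family, either reference suffices.
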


\begin{proof}
Let $n$ and $\Delta$ be known.
Since $n$ and $\Delta$ are the same for every ${\cal G}_i$, then
$\{ {\cal G}_i\} \setminus n = \{ {\cal G}_i\} \setminus \Delta  =  \{ {\cal G}_i\} \setminus \{n,\Delta\} = \{ {\cal G}_i\}$. 
Thus, by Theorem \ref{undecidable},  the lemma follows.
\end{proof}

%
%
%
%

\section{\foremost}

Solving \foremost in \RG or \BG clearly requires some sort of flooding, because the very fact of probing a neighbor to determine if it already has the information compromises the possibility to send it in a foremost fashion (in addition to risking the disappearance of the edge in-between the probe and the real sending). 
As a consequence of Theorem~\ref{th:imp1}, this problem cannot be solved without knowledge.
In this section we first show that it becomes possible in \RG if the 
number of nodes $n=|V|$ is known. The proof is constructive by means of Algorithm~\ref{algo:foremost-recurrent-n}, whose termination is however not bounded in time. Being in \BG with the same knowledge allows its termination to be bounded. Knowing $\Delta$ instead of $n$ in \BG then allows us to propose another solution  (described in Algorithm~\ref{algo:foremost-recurrent-delta}) that has a lower message complexity. This complexity can be further improved if both $\Delta$ {\em and} $n$ are known, as in this case we have the possibility to terminate {\em implicitly}. 

%
%
%

\subsection{\foremost in  $\cal{R}$}
\label{sec:Rn}
Since $\Delta$ and $p$ are not defined for  \RG, we need to focus only on   the knowledge of $n$.
We  show that the problem is solvable when $n$ is known. 

The algorithm proceeds as follows (see Algorithm~\ref{algo:foremost-recurrent-n} for details). Every time a {\em new} edge appears locally to an informed node, this node sends the information message onto this edge, and remembers that this edge now leads to an informed node.
The first time a node receives the information, it records the sender as parent, transmits the information on its available edges, and sends back a notification message to the parent. 
Note that these notifications create a parent-relation and thus a converge-cast tree.
Each notification is propagated along the converge-cast tree and eventually collected at the emitter. When the emitter has received $n-1$ notifications, it knows all nodes are informed.
Observe that the notification messages are sent using the special primitive $send\_retry$ discussed in Section~\ref{sec:defs}, to ensure that the parent eventually receives it even if the edge disappears during the first attempt. Information messages, on the other hand, are sent using the normal $send$ primitive. Indeed, if the propagation of such a message fails because the corresponding edge disappears, it simply means that this edge at that particular time did not have to be used (i.e., it did not belong to a valid journey).

\begin{algorithm*}[h]
  \small
  \begin{spacing}{1}
  \begin{code}[1]\medskip
    \STATE \hspace{-8pt}$Edge\ parent \gets nil$
    \COMMENT {edge the information was received from (for non-emitter nodes).}
    \STATE \hspace{-8pt}$Integer\ nbNotifications \gets 0$
    \COMMENT {number of notifications received (for the emitter).}
    \STATE \hspace{-8pt}$Set$$<$$Edge$$> informedNeighbors \gets \emptyset$
    \COMMENT {neighbors \underline{known} to have the information.}
    \STATE \hspace{-8pt}$Status\ myStatus \gets \neg{\tt informed}$
    \COMMENT {status of the node (informed or non-informed).}
    \medskip
    \STATE \hspace{-8pt}\underline{\textbf{\textit{initialization}}:}\medskip
    \IF {$isEmitter()$}
    \STATE $myStatus \gets {\tt informed}$
    \STATE $send(information)\ on\ I_{now()}$
    \COMMENT {sends the information on all present edges.}
    \ENDIF
    \STATE \hspace{-8pt}\underline{\textbf{\textit{onAppearance}} of an edge $e$:}\medskip
    \IF {$myStatus=={\tt informed}$ {\bf and} $e \notin informedNeighbors$}
    \STATE $send(information)\ on\ e$
    \STATE $informedNeighbors \gets informedNeighbors \cup \{e\}$
    \COMMENT {(see Prop.~\ref{prop:sendappear}).}
    \ENDIF\medskip
    \STATE \hspace{-8pt}\underline{\textbf{\textit{onReception}} of a message $msg$ from an edge $e$:}\medskip
    \IF {$msg.type==Information$}
    \STATE $informedNeighbors \gets informedNeighbors \cup \{e\}$
    \IF {$myStatus==\neg{\tt informed}$}
    \STATE $myStatus \gets {\tt informed}$
    \STATE $parent \gets e$
    \STATE $send(information)\ on\ I_{now()}\smallsetminus informedNeighbors$
    \COMMENT {propagates.}
    \STATE $send\_retry(notification)\ on\ e$
    \COMMENT {notifies that this node has the info. \\\hfill (this message is to be resent upon the next appearance, in case of failure).}
    \ENDIF
    \ELSIF {$msg.type==Notification$}
    \IF {$isEmitter()$}
    \STATE $nbNotifications \gets nbNotifications + 1$
    \IF {$nbNotifications==n-1$}
    \STATE $terminate$
    \COMMENT {at this stage, the emitter knows that all nodes are informed.}
    \ENDIF
    \ELSE
    \STATE $send\_retry(notification)\ to\ parent$
    \ENDIF
    \ENDIF
  \end{code}
  \end{spacing}
  \caption{\label{algo:foremost-recurrent-n} Foremost broadcast in $\cal{R}$, knowing $n$.}
\end{algorithm*}

\begin{theorem}
  When $n$ is known, \foremost can be solved in $\cal{R}$ 
   exchanging $O(m)$ information messages and $O(n^2)$ control messages, in unbounded time. 
\label{theo:foremost-R-n-known}
\end{theorem}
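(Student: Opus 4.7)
\begin{proof-sketch}
The plan is to prove three things separately: (i) every node becomes informed at its foremost time from the emitter, (ii) the emitter correctly detects termination, and (iii) the message bounds hold.

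For the foremost property, I would proceed by induction on the nodes ordered by their foremost arrival time $\hat d_{x,t_0}(v)$ from the emitter $x$ at the start time $t_0$. The base case is the emitter itself, which is informed at time $t_0$. For the inductive step, let $v$ be a node and consider any foremost journey $\J=\{(e_1,t_1),\dots,(e_k,t_k)\}$ from $x$ to $v$. Write $v_i$ for the $i$-th intermediate node and let $\tau_i=t_i+\zeta$. By induction the predecessor $v_{k-1}$ becomes informed at some time $\tau^*\le t_k$, since otherwise there would be a strictly earlier journey to $v_{k-1}$ and hence potentially to $v$, contradicting foremostness. Once $v_{k-1}$ is informed, two situations can occur: either $e_k$ is already present at $\tau^*$, in which case $v_{k-1}$ immediately sends the information on $e_k$ via the initialization/reception block (Property~\ref{prop:sendappear} guarantees success since sending happens on the still-present edge), or $e_k$ is absent at $\tau^*$ and then the first subsequent appearance of $e_k$, which by definition of a journey is at time $t_k\ge \tau^*$, triggers the \emph{onAppearance} handler at $v_{k-1}$ and the information is sent. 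Either way, $v$ receives the information no later than $t_k+\zeta = arrival(\J)$, i.e.\ at its foremost time.

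For termination detection, I would first observe that every non-emitter node eventually becomes informed: since the footprint is connected and all edges are recurrent, a journey from $x$ to every node exists, and the argument above shows each receives the information in finite (but possibly unbounded) time. Upon its first reception, each non-emitter sends exactly one \emph{notification} along its \emph{parent} edge using $send\_retry$, which, by the semantics given in Section~\ref{sec:defs}, is guaranteed to succeed on the next appearance of that edge (again finite by recurrence). The parent then forwards the notification up via another $send\_retry$, and so on. Since the parent pointers form a tree rooted at $x$ (each node picks a parent exactly once, at reception time, from a node that is already informed), each of the $n-1$ initial notifications reaches $x$ after at most $n-1$ relayings. When $x$ has counted exactly $n-1$ of them, every node is informed, so termination is both sound and complete.

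For complexity, on each edge $(u,w)$ an information message is sent in the direction $u\to w$ at most once: the \emph{onAppearance} guard excludes edges already in $informedNeighbors$, and the reception handler inserts the sender's edge into $informedNeighbors$ before the propagation step, so $u$ never sends twice on the same edge. Thus the total is at most $2m=O(m)$ information messages. For control messages, each of the $n-1$ notifications traverses the converge-cast tree along a path of length at most $n-1$, yielding $O(n^2)$ control messages in the worst case (attained e.g.\ by a chain-shaped tree). The unboundedness of time is immediate from the class $\RG$: recurrence times may be arbitrarily large, so the delay before the last edge on some foremost journey reappears is not bounded by any function of $n$. The main technical subtlety is the inductive argument for the foremost property, because one must rule out that the predecessor $v_{k-1}$ could be informed later than $t_k$, which would mean $\J$ is not actually foremost; the argument relies crucially on the definition of foremost journey ensuring all prefixes are themselves tight.
\end{proof-sketch}
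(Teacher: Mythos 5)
Your proposal is correct and follows essentially the same route as the paper: flooding on newly appearing edges gives the foremost property, the parent pointers form a converge-cast tree whose $n-1$ relayed notifications let the emitter count to $n-1$, and the $O(m)$/$O(n^2)$ bounds come from one traversal per edge direction and from summing tree distances (worst case a line). Your induction on foremost arrival times is simply a more detailed justification of the paper's one-sentence claim that dissemination is foremost because information is relayed on present edges or sent upon appearance; the only slight imprecision is that a send on an already-present edge is not guaranteed by Property~\ref{prop:sendappear} to succeed, but a failed send is compensated by the \emph{onAppearance} re-send at the next appearance, which still meets the $arrival(\J)$ deadline.
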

\begin{proof}
Since a node sends the information to each  new appearing edge,
 it is easy to see, by   connectivity of  the {\em underlying} graph,  
 that all nodes will eventually receive the information. 
 The dissemination itself is necessarily foremost because
 the information is either directly relayed on edges that are present, or sent as soon as a new  edge  appears.
As for termination detection:
 every node identifies a unique parent and a converge-cast spanning tree 
 directed towards the source is implicitly  constructed; 
 since every node notifies the source (through the tree)
 and the source knows the total number of nodes, termination is guaranteed.
Since information  messages might traverse every edge in both directions, and
an edge cannot be traversed twice in the same direction, we have that 
  the number of {\it information} messages   
is in the worst case $2m$. 
Since every node but the emitter induces a notification that is forwarded
up the converge-cast  tree to the emitter, the number of {\em notification} messages 
is  the sum of distances in the converge-cast  tree between all nodes and the emitter, 
$\sum_{v \in V\smallsetminus\{emitter\}} d_{h\_tree}(v, emitter)$. The worst case
is when the graph is a line where we have    $\frac{n^2-n}{2}$ control messages. 
Regarding time complexity, the termination of the algorithm is unbounded due to the fact that the recurrence of the edges is itself unbounded.
\end{proof}

\paragraph{Use in subsequent broadcasts.}

Foremost trees are {\em time-dependent} in the sense that they might
be optimal for some emission dates and not be so for other dates.
Still, they remain {\em valid} trees (though, possibly non-foremost
ones) regardless of the considered date. As such, they can be
memorized by the nodes in order to be used as {\em converge-cast}
trees for termination detection in subsequent broadcasts. Indeed,
while the broadcast is required to be foremost, the detection of
termination does not have such constraint. Hence, instead of sending a
notification each time a new node is informed (as done previously),
nodes can notify their parents (in the converge-cast tree) if and only
if they are themselves informed and have received a notification from
each of their children (in the converge-cast tree).
This reduces the
number of control messages from $O(n^2)$ to $O(n)$, having only one
notification per edge of the converge-cast tree.

\subsection{\foremost in  $\cal{B}$}
If the recurrence time is bounded, then either the knowledge of $n$ or an upper bound $\Delta$ on the recurrence time can be used to solve the problem (with various message complexities).

\subsubsection{Knowledge of $n$.}
Since $\BG \subseteq \RG$, one can obviously solve \foremost in $\cal{B}$ using Algorithm~\ref{algo:foremost-recurrent-n} (and the same observations apply regarding reusability of the converge-cast tree). Here, however, the termination time becomes bounded due to the fact that the recurrence of edges is itself bounded.

\begin{theorem}
  When $n$ is known, \foremost can be solved in $\cal{B}$ exchanging $O(m)$ information messages and $O(n^2)$ control messages, in $O(n\Delta)$ time.
\end{theorem}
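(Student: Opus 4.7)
The plan is to observe that Algorithm~\ref{algo:foremost-recurrent-n} runs unchanged in $\cal{B}$ since $\cal{B}\subseteq\cal{R}$, so correctness, message complexities, and the foremost property all transfer directly from Theorem~\ref{theo:foremost-R-n-known}. The only genuinely new content is the time bound $O(n\Delta)$, which is what I would focus the argument on.

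First I would bound the dissemination phase. Let $G=(V,E)$ be the (connected) footprint and let $x$ be the emitter starting at time $t_0$. Define $L_k$ to be the set of nodes at footprint-distance exactly $k$ from $x$. I claim that every node in $L_k$ becomes informed by time $t_0 + k\Delta$. The base case $k=0$ is immediate. For the induction step, let $v\in L_k$ have a footprint neighbor $u\in L_{k-1}$. By induction $u$ is informed by time $t_0+(k-1)\Delta$. Since the recurrence on edge $(u,v)$ is bounded by $\Delta$, this edge must (re-)appear at some time $t'\in[t_0+(k-1)\Delta,\,t_0+k\Delta]$. By the \textbf{onAppearance} clause of Algorithm~\ref{algo:foremost-recurrent-n}, $u$ immediately transmits the information on $(u,v)$ when it becomes available, and by Property~\ref{prop:sendappear} the transmission succeeds, so $v$ is informed by time $t'+\zeta\le t_0+k\Delta+\zeta$. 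Since the footprint is connected on $n$ nodes, its diameter is at most $n-1$, so all nodes are informed by time $t_0+(n-1)\Delta+\zeta$.

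Next I would bound the termination phase. The converge-cast tree implicitly built through the $parent$ pointers has depth at most $n-1$. A notification issued by a freshly informed node is sent (via $send\_retry$) on its parent edge; by bounded recurrence, that edge reappears within $\Delta$, so the notification moves up one level within $\Delta+\zeta$ time. Hence the last notification to arrive at the emitter does so within at most $(n-1)(\Delta+\zeta)$ time after the last node is informed, i.e., by time $t_0+2(n-1)(\Delta+\zeta)=O(n\Delta)$ (absorbing $\zeta$ into the constant since $\zeta\le\Delta$). The emitter then detects termination upon the arrival of the $(n{-}1)$-th notification, since $n$ is known.

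Combining dissemination and termination-detection gives the claimed $O(n\Delta)$ time bound, while the $O(m)$ information messages and $O(n^2)$ control messages are unchanged from Theorem~\ref{theo:foremost-R-n-known}. The main (minor) subtlety is the bookkeeping for the notification phase: one must use $send\_retry$ rather than $send$ so that a notification which finds its parent edge absent is automatically re-sent within $\Delta$, which is precisely why the per-hop delay is bounded.
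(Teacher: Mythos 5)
Your proposal is correct and follows essentially the same route as the paper: the paper's proof likewise observes that bounded recurrence forces dissemination to complete within $(n-1)\Delta$, that the notification phase takes at most the same again (for a total of $\Delta(2n-2)$), and defers correctness and message counts to Theorem~\ref{theo:foremost-R-n-known}. Your version merely makes explicit the layered induction on footprint distance and the per-hop $send\_retry$ delay that the paper leaves implicit.
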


\begin{proof}
  Since all edges in $E$ are recurrent within any $\Delta$ time window, the delivery of the information at the last node must occur within $(n-1)\Delta$ global time. The same property holds for the latest notification, bounding the overall process to a duration of $\Delta(2n-2)$. The rest follows from Theorem~\ref{theo:foremost-R-n-known}.
\end{proof}

\subsubsection{Knowledge of  $\Delta$.}
\label{sec:foremost-recurrent-delta}

The information dissemination is performed as in Algorithm
\ref{algo:foremost-recurrent-n}, but the termination detection is
different. Thanks to the time-bound $\Delta$ on edge recurrence, a
node can discover all of its neighbors within $2\Delta$ time (back and
forth messages, each on a different $\Delta$ period at the worst).
This fact can be used by a node to determine whether it is a leaf in
the broadcast tree ({\it i.e.,} if it has not heard back from a
potential child within $2\Delta$ time following its own reception
time). This allows the leaves to terminate spontaneously and notify
their parent, which recursively terminate after receiving the
notifications from all their children and notifying their own parent.
This combination of broadcast-convergecast, originally described
in~\cite{DS80}, is a standard process for distributed computing in
{\em static} graphs. The temporal adaption for bounded-recurrent TVGs
is as follows (see Algorithm~\ref{algo:foremost-recurrent-delta} for
details). First, everytime a {\em new} edge appears locally to an
informed node, this node sends the information on the edge, and
records the edge. The first time a node receives the information, it
chooses the sender as parent, memorizes the current time (in a
variable $firstRD$), transmits the information on its available edges,
and returns an {\em affiliation} message to its parent using the
$send\_retry$ primitive (starting to build the converge-cast tree).
This affiliation message is not relayed further up the tree: it is
only intended to inform the parent about the existence of a new child
(so this parent knows it must wait for a future notification by this
node). If an informed node has not received any affiliation message
after a duration of $2\Delta$, then it sends a {\em notification}
message to its own parent using the $send\_retry$ primitive.

As for Algorithm~\ref{algo:foremost-recurrent-delta}, if the information message is lost, then it simply means that this edge at that time did not have to be used. On the other hand, if the {\em affiliation} message is lost, it must be sent again ($send\_retry$). However, in the worst case, the common edge disappears just before the affiliation message is delivered, and reappears less than $\Delta - \zeta$ later (Prop.~\ref{prop:sendappear}). The overall back and forth exchange thus remains within $2\Delta$ time. 

If a node has one or more children, it waits until it receives a notification message from each of them, then notifies its parent in the converge-cast tree (using $send\_retry$ again). Once the emitter has received a notification from each of its children, it knows that all nodes are informed.

\begin{algorithm*}[h]
  \small
  \begin{spacing}{1}
  \begin{code}[1]\medskip
    \STATE \hspace{-8pt}$Edge\ parent \gets nil$
    \COMMENT {edge the information was received from (for non-emitter nodes).}
    \STATE \hspace{-8pt}$Integer\ nbChildren \gets 0$
    \COMMENT {number of children.}
    \STATE \hspace{-8pt}$Integer\ nbNotifications \gets 0$
    \COMMENT {number of children that have terminated.}
    \STATE \hspace{-8pt}$Set$$<$$Edge$$> informedNeighbors \gets \emptyset$
    \COMMENT {neighbors \underline{known} to have the information.}
    \STATE \hspace{-8pt}$Date\ firstRD \gets nil$
    \COMMENT {date of first reception.}
    \STATE \hspace{-8pt}$Status\ myStatus \gets \neg{\tt informed}$\medskip
    \COMMENT {status of the node (informed or non-informed).}
    \STATE \hspace{-8pt}\underline{\textbf{\textit{initialization}}:}\medskip
    \IF {$isEmitter()$}
    \STATE $myStatus \gets {\tt informed}$
    \STATE {$send(information)\ on\ I_{now()}$}
    \COMMENT {sends the information on all present edges.}
    \ENDIF
    \STATE \hspace{-8pt}\underline{\textbf{\textit{onAppearance}} of an edge $e$:}\medskip
    \IF {$myStatus=={\tt informed}$ {\bf and} $e \notin informedNeighbors$}
    \STATE $send(information)\ on\ e$
    \STATE $informedNeighbors \gets informedNeighbors \cup \{e\}$
    \COMMENT {(see Prop.~\ref{prop:sendappear}).}
    \ENDIF\medskip
    \STATE \hspace{-8pt}\underline{\textbf{\textit{onReception}} of a message $msg$ from an edge $e$:}\medskip
    \IF {$msg.type==Information$}
    \STATE $informedNeighbors \gets informedNeighbors \cup \{e\}$
    \IF {$myStatus==\neg{\tt informed}$}
    \STATE $myStatus \gets {\tt informed}$
    \STATE $firstRD \gets now()$
    \COMMENT {memorizes the date of first reception.}
    \STATE $parent \gets e$
    \STATE $send(information)\ on\ I_{now()}\smallsetminus informedNeighbors$
    \COMMENT {propagates.}
    \STATE $send\_retry(affiliation)\ on\ e$
    \COMMENT {informs the parent that it has a new child.}
    \ENDIF
    \ELSIF {$msg.type == Affiliation$}
    \STATE $nbChildren \gets nbChildren + 1$
    \STATE $informedNeighbors \gets informedNeighbors \cup \{e\}$
    \ELSIF {$msg.type == Notification$}
    \STATE $nbNotifications \gets nbNotifications + 1$
    \IF {$nbNotifications == nbChildren$}
    \IF {$\neg isEmitter()$}
    \STATE $send\_retry(notification)\ to\ parent$
    \COMMENT {notifies the parent in turn.}
    \ENDIF
    \STATE $terminate$
    \COMMENT {whether emitter or not, the node has terminated at this stage.}
    \ENDIF
    \ENDIF\medskip
    \STATE \hspace{-8pt}\underline{\textbf{\textit{when}} $now() == firstRD + 2\Delta$:}\medskip
    \COMMENT {tests whether the underlying node is a leaf.}
    \IF {$nbChildren == 0$}
    \STATE $send\_retry(notification)\ on\ parent$
    \STATE $terminate$
    \ENDIF
  \end{code}
  \end{spacing}
  \caption{\label{algo:foremost-recurrent-delta} Foremost broadcast in $\cal{B}$, knowing a bound $\Delta$ on the recurrence time.}
\end{algorithm*}

\begin{theorem}
  When $\Delta$ is known, \foremost can be solved in $\cal{B}$ exchanging $O(m)$ information messages and $O(n)$ control message, in $O(n\Delta)$ time.
\end{theorem}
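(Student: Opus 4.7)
The plan is to adapt the structure of the proof of Theorem~\ref{theo:foremost-R-n-known} for correctness and for the $O(m)$ information-message bound, and to add three new arguments specific to the use of $\Delta$: the correctness of the leaf-detection timer, the $O(n)$ bound on control messages, and the $O(n\Delta)$ time bound.

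First I would argue the \emph{foremost property} of the dissemination. The dissemination code is identical to Algorithm~\ref{algo:foremost-recurrent-n}: an informed node sends the information on every already-present incident edge upon becoming informed and on every newly-appearing edge thereafter. By induction on the number of hops of an optimal journey from the emitter to a node $v$, the information traverses a foremost journey to $v$ in minimum time, exactly as in Theorem~\ref{theo:foremost-R-n-known}. The information-message bound of $O(m)$ follows because each edge is traversed at most once in each direction by an information message: after the first transmission, the edge is added to $informedNeighbors$ on both endpoints and is never used again for an information message.

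Next I would establish the \emph{correctness of termination detection}, which is the main new ingredient. The key lemma will be: at time $firstRD_u + 2\Delta$, the counter $nbChildren$ at node $u$ equals the exact number of neighbors $w$ for which $u$ is the parent. The forward direction is clear, since affiliations are sent only by a node to its unique parent. For the reverse direction, consider any incident edge $(u,w)$. By $\Delta$-bounded recurrence, $(u,w)$ appears at least once in $[firstRD_u, firstRD_u+\Delta]$; call that appearance date $t'$. At $t'$, an information message is sent on $(u,w)$ by whichever endpoint is informed (or both). If $w$ becomes informed for the first time through this edge at $t'+\zeta$, then $w$ sets $parent \gets (u,w)$ and invokes $send\_retry(affiliation)$; by Property~\ref{prop:sendappear} and the $\Delta$ bound, this affiliation reaches $u$ no later than $t'+\Delta \le firstRD_u + 2\Delta$. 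Otherwise $w$ is either already informed at $t'$ or becomes informed through a different edge, in which case no affiliation is ever sent on $(u,w)$. Hence at time $firstRD_u+2\Delta$ the test on $nbChildren$ is sound, and a standard induction on the converge-cast tree shows that each internal node eventually notifies its parent and that the emitter terminates iff all nodes are informed.

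For the \emph{control-message bound}, each non-emitter node sends exactly one affiliation message (to its unique parent) and at most one notification message (also to its parent), giving at most $2(n-1)=O(n)$ control messages. Finally, for \emph{time}, bounded recurrence guarantees that information advances by at least one hop every $\Delta$ on the underlying connected footprint, so the last first-reception occurs by time $t_0+(n-1)\Delta$. Each such node then waits $2\Delta$, and notifications propagate up the converge-cast tree with each hop taking at most $\Delta$ (again by $send\_retry$ and Property~\ref{prop:sendappear}) over a path of length at most $n-1$; summing the three phases yields $O(n\Delta)$. The step I expect to be the main obstacle is the soundness of the $2\Delta$ window when an edge can disappear \emph{during} the affiliation sending attempt, which is where the $send\_retry$ primitive and Property~\ref{prop:sendappear} must be invoked carefully to guarantee delivery within the second $\Delta$ slice.
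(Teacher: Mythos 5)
Your proposal is correct and follows essentially the same route as the paper's proof: the dissemination argument and the $O(m)$ bound are inherited from the $\RG_n$ case, termination detection rests on the same $2\Delta$ leaf-detection window justified via $send\_retry$ and Property~\ref{prop:sendappear}, the control messages are counted as one affiliation plus one notification per non-emitter node ($2(n-1)$ total), and the time bound is obtained by the same three-phase decomposition $(n-1)\Delta + 2\Delta + (n-1)\Delta$. The only difference is one of detail: you state and prove the child-counting lemma explicitly, whereas the paper only sketches why leaves can self-detect within $2\Delta$.
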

\begin{proof}
Correctness follows the same lines of the proof of Theorem \ref {theo:foremost-R-n-known}.
However the correct construction of a converge-cast spanning tree is guaranteed by
 the knowledge of $\Delta$ (i.e., the nodes of the tree that are leaves detect their status because 
 no new edges appear within $\Delta$ time)  and the notification starts from the 
 leaves and is aggregated before reaching the source.
  The number of information messages is $O(m)$ as the exchange of information messages is the same as in Algorithm \ref{algo:foremost-recurrent-n}, but the number of notification and affiliation messages decreases to
  $2(n-1)$. Each node but the emitter sends a single affiliation message;
  as for the notification messages,  instead of sending a notification as soon as it is informed,  
each node notifies its parent in the converge-cast tree 
if and only if it  has received a notification from each of its children
resulting in one notification message per edge of the tree. 
  The time complexity of the dissemination itself is the same as for the foremost broadcast when $n$ is known ($(n-1)\Delta$). The time required for the emitter to subsequently detect termination is an additional $2\Delta + (n-1)\Delta$ in the worst case ({\it i.e.} self-detection by the leaves, followed by the longest notification chain up the emitter). This gives a total of $2n\Delta$.
\end{proof}


 Clearly, the number of nodes $n$, which is not {\it a priori} known here, can be obtained through the notification process of the first broadcast (by having nodes reporting their number of descendants in the tree, while notifying hierarchically).
All subsequent broadcasts can thus behave as if both $n$ and $\Delta$ were known.
Next we show this allows to solve the problem without any control messages.
 
\subsubsection{Knowledge of both  $n$ and $\Delta$}\label{both}

In this case, the emitter knows an upper bound on the broadcast termination date;
in fact, the broadcast cannot last longer than $ (n-1) \Delta$ (this worst case is when 
the foremost tree is a line). 
Termination detection can thus become implicit after this amount of time, 
which removes the need for any control message (whether of affiliation or of notification). 

\begin{theorem}
 When $\Delta$ and $n$  are known, \foremost 
 can be solved in $\cal{B}$ exchanging $O(m)$ information messages and no 
control messages,  in $O(n \Delta)$ time.
\end{theorem}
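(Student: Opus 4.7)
The plan is to reuse verbatim the information-dissemination logic from Algorithm~\ref{algo:foremost-recurrent-n} and to eliminate all control messages, replacing explicit termination by an implicit time-out at the emitter. Concretely, every informed node forwards the information on each new incident edge it sees (whether already present at the time of becoming informed, or appearing later), using the normal $send$ primitive; the emitter simply terminates at local time $t_0 + (n-1)\Delta$, where $t_0$ is the date of creation of the information. No affiliation or notification message is ever exchanged.

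The foremost guarantee and the $O(m)$ bound on information messages then follow verbatim from the argument used for Theorem~\ref{theo:foremost-R-n-known}, since the dissemination rules are identical: each edge carries at most one information message in each direction, and any informed node forwards the information at the earliest moment an incident edge is available. The only new obligation is to argue that, by the self-chosen deadline $t_0 + (n-1)\Delta$, every node has been informed. I would show by induction on the footprint hop-distance $k$ from the emitter that every node at distance $k$ is informed by time $t_0 + k\Delta$. The base case $k=0$ is immediate. For the inductive step, let $v$ be at distance $k$ and let $u$ be a footprint-neighbor of $v$ at distance $k-1$, so that by hypothesis $u$ is informed by some time $t \le t_0 + (k-1)\Delta$. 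By bounded recurrence, the edge $(u,v)$ is present at some moment in $[t, t+\Delta]$, and by Property~\ref{prop:sendappear}(1) $u$'s $send$ on that appearance is guaranteed to succeed, so $v$ is informed by $t_0 + k\Delta$. Since the footprint is connected with $n$ nodes, its diameter is at most $n-1$, and every node is informed by the deadline.

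The main (mild) subtlety is to keep clearly separate the two roles played by the quantity $(n-1)\Delta$: each individual node still receives the information at its \emph{foremost} date, which may be arbitrarily smaller than this worst-case bound, while the emitter uses the bound only as a safe upper envelope for scheduling its own implicit termination. The time complexity is $O(n\Delta)$ end-to-end, since dissemination completes by $t_0 + (n-1)\Delta$ and the emitter terminates at exactly that instant without any additional round-trip, and the message counts are $O(m)$ information messages and zero control messages by construction.
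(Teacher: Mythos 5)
Your proposal is correct and follows essentially the same route as the paper, which justifies this theorem only by the brief observation that dissemination is unchanged and that the broadcast cannot last longer than $(n-1)\Delta$, so the emitter may terminate implicitly after that delay with no control messages. Your explicit induction on footprint hop-distance merely fleshes out that worst-case bound (up to the same additive $\zeta$-per-hop slack the paper also ignores, which is absorbed by the $O(n\Delta)$ statement), so there is nothing substantively different to compare.
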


\subsection{Reusability}

As we have seen, \foremost is feasible even in $\RG$ with knowledge $n$. 
However our algorithms in $\RG_n$ or even  in $\BG_{\{n,\Delta\}}$  do not provide  reusability.
This is not accidental; in fact, 
 we will  now show that achieving reusability in \foremost
 is {\em  impossible} even in   $\PG_{\{n,\Delta\}}$.

\begin{theorem}  
Foremost broadcast trees are not reusable in $\PG_{\{n,\Delta\}}$. 
\label{th:no-reuse-P-ndelta}
\end{theorem}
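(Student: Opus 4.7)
The plan is to derive a contradiction using the family $\{{\cal G}_i\}$ introduced just before Lemma~\ref{newlemma}, for which $n=4$ and $\Delta=4$ are fixed and known. I take the emitter to be $u$ and focus on delivery to $v$ through the two parallel paths $u$--$x$--$v$ (edges $e_1,e_3$) and $u$--$y$--$v$ (edges $e_2,e_4$).

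First, I would pin down the unique foremost tree in ${\cal G}_0$ when broadcasting from $u$ at time $0$. Nodes $x$ and $y$ can achieve their foremost arrival at time $1$ only by being children of $u$ (via $e_1$ and $e_2$, respectively), and $v$'s foremost arrival is at time $3$, which is achievable only through $x$ using $e_3$ (in ${\cal G}_0$, $e_4$ first becomes usable at time $3$, delivering to $v$ no earlier than time $4$). Hence any correct algorithm must output the tree $T=\{e_1,e_2,e_3\}$ with $v$ as a child of $x$, and the first broadcast terminates at some finite date $\tau$.

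Second, I would invoke indistinguishability. Choose $i$ large enough that $4i+1>\tau$. The evolutions of ${\cal G}_0$ and ${\cal G}_i$ coincide on $[0,\tau]$, because the only topological difference --- the extra appearances of $e_4$ arising from the rule $[t]_{4(i+1)}=4i+1$ --- begins only at time $4i+1$. By the same argument used in Lemma~\ref{newlemma}, the algorithm behaves identically on both graphs during $[0,\tau]$ and therefore commits to the same tree $T$ in ${\cal G}_i$. Then I would consider a subsequent broadcast launched at $t'=4i$ in ${\cal G}_i$: following $T$, $x$ and $y$ are informed at $4i+1$ and $v$ receives the message at $4i+3$ via the next appearance of $e_3$ on $[4i+2,4i+3]$; however, in ${\cal G}_i$ with $i>0$, the edge $e_4$ is present on $[4i+1,4i+2]$, so the journey $u\to y\to v$ delivers the message to $v$ at time $4i+2$. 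Thus $T$ is not foremost for this second broadcast, contradicting reusability.

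The main obstacle is the indistinguishability step: one must argue that, whatever the tree-construction strategy, the tree itself is fully determined by what the algorithm has observed up to termination of the first broadcast, so that the adversary can then pick $i$ after the fact by inspecting $\tau$. The family $\{{\cal G}_i\}$ and the reasoning of Theorem~\ref{undecidable} and Lemma~\ref{newlemma} provide exactly this leverage, so the argument reduces cleanly to verifying the arithmetic on $e_4$'s extra appearance at $[4i+1,4i+2]$ versus the next appearance of $e_3$ at $[4i+2,4i+3]$.
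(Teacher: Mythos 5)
Your proof is correct and follows essentially the same route as the paper: the same family $\{{\cal G}_i\}$, the same indistinguishability leverage from Theorem~\ref{undecidable} and Lemma~\ref{newlemma}, and the same observation that the foremost route to $v$ switches from $\{e_1,e_3\}$ to $\{e_2,e_4\}$ at certain emission dates in ${\cal G}_i$. Your identification of $t'=4i$ as the critical emission date (where $e_4$'s extra appearance on $[4i+1,4i+2]$ beats $e_3$'s appearance on $[4i+2,4i+3]$) is in fact more precise than the paper's own phrasing of that condition.
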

\begin{proof}
Consider the  infinite family   $\{{\cal G}_i\}$ of TVGs defined in Section \ref{sec:basic}.
 In  ${\cal G}_0$ the foremost journey from $u$ to $v$ at time $t$    is always along the edges $e_1$ and $e_3$.
 In  ${\cal G}_i$, $i>0$,  the foremost journey  from $u$ to $v$ at time $t>0$ with   $[t]_{4i}=0$
 is  also along the edges $e_1$ and $e_3$, 
{\em except} when  $[t]_{4(i+1)}=0$, in which case  the foremost journey  is  along the
edges $e_2$ and $e_4$.
By Lemma \ref{newlemma},  even knowing $n$ and $\Delta$, the evolution of ${\cal G}_0$ until time $t$ 
with $[t]_{4i}=0$ is undistinguishable from that of  ${\cal G}_i$  with $i  \geq {t \over 4} $.
Then, by  just observing the evolution until time $t$ with $[t]_{4i}=0$, it is impossible to decide at that time
 whether the graph is   ${\cal G}_0$ or not;  hence, it is impossible to decide 
 whether  the foremost path from $u$ to $v$ at that time  is  $\{e_1,e_3\}$ (foremost in    ${\cal G}_0$)
or   $\{e_2,e_4\}$ (foremost in    ${\cal G}_i$  with $i  \geq {t \over 4} $).
\end{proof}

While $n$ and $\Delta$ are not sufficient for reusability in \PG,  it has been shown in~\cite{CFMS14} that the 
knowledge of the period $p$ is sufficient.

\begin{theorem} \cite{CFMS14}
Foremost broadcast trees can be reused for subsequent broadcasts in $\PG$ with knowledge $p$. 
\label{th:reuse-P-p}
\end{theorem}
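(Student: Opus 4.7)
\medskip

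\noindent\textit{Proof plan.}\hspace*{.5em}The plan is to exploit the known period $p$ to pre-compute, during the initial broadcast, all the routing information needed for every subsequent broadcast. Since $\G \in \PG$ has period $p$, the foremost journey from the source $s$ to any node $v$ initiated at time $t'$ depends only on the phase $\phi = t' \bmod p$; there are only finitely many such phases, so the foremost routing decisions can be tabulated once and reused forever.

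First, I would run an initial foremost broadcast using Algorithm~\ref{algo:foremost-recurrent-delta}, which applies because $\PG \subset \BG$ and $p$ is a valid bound $\Delta$. Then, during the convergecast phase of that algorithm, each node piggybacks its local incident-edge schedule over one period onto its notification message. Once the source has aggregated these schedules up the tree, it possesses a complete description of $\G$ over one period and can compute, via the offline algorithm of~\cite{BFJ03}, a foremost tree $T_\phi$ rooted at $s$ for every phase $\phi \in [0,p)$. The reusable hierarchy is then the union $H = \bigcup_{\phi \in [0,p)} T_\phi$, which is a subgraph of the footprint; the source disseminates $H$ together with a per-edge phase table to all nodes in a final broadcast.

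For any subsequent broadcast initiated by $s$ at time $t'$, the source prefixes the message with $\phi = t' \bmod p$, and each receiving node forwards along the edges of $T_\phi$ as dictated by its local phase table. Foremostness follows immediately from the construction of each $T_\phi$, while the same hierarchy of nodes and edges $H$ is followed at every run, matching the definition of reusability given in Section~\ref{sec:defs}. The main obstacle is the one-time cost of gathering the complete periodic schedule at $s$ and then redistributing $H$: this is finite but potentially large, being bounded by the description size of $\G$ over one period. This upfront cost is the price paid to enable unbounded reuse of a fixed hierarchy, and it is also what distinguishes $\PG_p$ from $\PG_{\{n,\Delta\}}$, where such aggregation is provably impossible by Theorem~\ref{th:no-reuse-P-ndelta}.
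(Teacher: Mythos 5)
The paper's own justification of this (cited) result is a one-liner: by periodicity, a foremost broadcast tree computed for emission date $t$ is again a foremost tree for every emission date $t+jp$, $j$ a positive integer; knowing $p$ is exactly what lets the emitter identify those dates and purposedly follow the same tree again. Your opening observation --- that the foremost journeys from $s$ depend only on the phase $t' \bmod p$ --- is precisely this argument, and on its own it already proves the statement: the tree built in the first run, at phase $\phi_0$, can be reused for every subsequent broadcast at phase $\phi_0$, which is all that the paper's definition of a reusable broadcast tree requires.

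The machinery you add on top of that weakens rather than strengthens the proof. First, the model here is continuous-time ($\T=\R^+$), so there are uncountably many phases $\phi\in[0,p)$; the claim that ``there are only finitely many such phases'' is false as stated, and tabulating a tree for every phase would need an extra argument that the foremost tree is piecewise constant in $\phi$ (finitely many topological events per period). Second, the object you propose to reuse, $H=\bigcup_{\phi} T_\phi$, is not a broadcast tree, and the hierarchy actually followed in a given run is $T_\phi$, which changes with the phase --- so it is not ``the same hierarchy of nodes and edges'' being followed at every run, contrary to what you assert and to what the definition of reusability demands; the clean claim concerns a single tree reused at dates congruent modulo $p$. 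Third, aggregating all local schedules at the source and redistributing $H$ is a substantial protocol the theorem does not need, and your closing remark that such aggregation is ``provably impossible'' in $\PG_{\{n,\Delta\}}$ misreads Theorem~\ref{th:no-reuse-P-ndelta}, which shows non-reusability of foremost trees there, not impossibility of schedule collection. Strip the proposal down to its first paragraph and you essentially have the paper's argument.
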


\noindent The basic argument is that a foremost broadcast tree for time $t$ remains optimal for all times $t+jp$, where $j$ is a positive integer.

\section{\shortest}

Recall that the objective of \shortest is to deliver the information to each node within a minimal {\em number of hops} from the emitter, and to have the emitter detect termination within finite time. 
We show below that contrary to the foremost case, knowing $n$ is insufficient to perform a shortest broadcast in $\RG$ or even in $\cal{B}$. This becomes however feasible in \BG when $\Delta$ is also known.
Moreover any shortest tree built at some time $t$ will remain optimal in $\BG$ relative to any future emission date $t'>t$. This feature allows the solution to \shortest to be possibly reused in subsequent broadcasts.

\subsection{\shortest in  $\cal{B}$}


As we will show later (see Theorem~\ref{th:neg-shortest-P}),  knowing $n$ is not sufficient to solve
 \shortest in \PG, and thus also in \BG  (or \RG). Therefore, the only knowledge still of interest are $\Delta$ and the combination of $\Delta$ and $n$.


\subsubsection{Knowledge of $\Delta$}
\label{sec:shortest-delta}

The idea is to propagate the message along the edges of a {\em breadth}-first 
 spanning tree
of the underlying graph. We present the pseudo-code in Algorithm \ref{algo:shortest-delta}, and provide the following informal description. 

Assuming that the message is created at some date $t$, the mechanism consists of authorizing nodes at level $i$ in the tree to inform new nodes only between time $t+i\Delta$ and $t + (i+1)\Delta$ (doing it sooner would lead to a non-shortest tree, while doing it later is pointless because all the edges have necessarily appeared within one $\Delta$). So the broadcast is confined into rounds of duration $\Delta$ as follows: whenever a node sends the information to another, it sends a time value that indicates the 
remaining duration of its round (that is, the starting date of its own round minus the current time, plus $\Delta$, minus the crossing delay $\zeta$), so the receiving node, if it is a new child, knows when it should start informing new nodes in its turn. 
For instance in Figure~\ref{fig:roundstart} when the node $a$ attempts to become $b$'s parent,  node a transmits its own starting date plus $\Delta$ minus the current date minus $\zeta$. This duration corresponds to the exact amount of time the child would have to wait, if the relation is established, before integrating other nodes in turn.
If a node does not detect any children before $2\Delta$ following its own reception date (same as Algorithm~\ref{algo:foremost-recurrent-delta}), then it detects that it is a leaf and notifies its parent. Otherwise, it waits for the final notifications of all its children, then notifies its parent. As before, this requires parents to keep track of the number of children they have, and thus children need to send {\em affiliation} messages when they select a parent. Finally, when the emitter has been notified by all its children, it knows the broadcast is terminated.

\begin{theorem}
  \shortest can be solved in $\cal{B}$ knowing $\Delta$, exchanging $O(m)$ info. messages and $O(n)$ control messages, in $O(n\Delta)$ time.
  \label{th:shortest-B}
\end{theorem}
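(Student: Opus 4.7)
The plan is to establish three items in turn: (i) the broadcast tree produced by the algorithm coincides with a BFS tree of the footprint, (ii) the converge-cast correctly triggers termination at the emitter, and (iii) the three stated complexity bounds. For (i), I would prove by induction on $d\ge 0$ the following invariant: every node $v$ at BFS distance exactly $d$ from the source in the footprint is first informed during the interval $[t+(d-1)\Delta,\,t+d\Delta]$, selects a level-$(d-1)$ parent, and thus becomes a level-$d$ node of the tree. The base case $d=0$ is the emitter itself. For the inductive step the lower bound on $v$'s first-reception date uses the fact that every neighbor of $v$ has BFS distance at least $d-1$ and hence, by induction, level at least $d-1$; such neighbors do not begin transmitting until time $t+(d-1)\Delta$. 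The upper bound uses the existence of a neighbor $u$ at BFS distance $d-1$: by induction $u$ sits at level $d-1$, so its informing round $[t+(d-1)\Delta,\,t+d\Delta]$ has length (essentially) $\Delta$, and the $\Delta$-bound on recurrence forces edge $(u,v)$ to appear at least once during this window, so $u$ successfully delivers the information to $v$ before its round expires. The key algorithmic mechanism sustaining the induction is the piggy-backed value $s+\Delta-t_{\mathit{now}}-\zeta$: it makes every newly-adopted child's round begin at exactly $s+\Delta$, independent of the actual date of the transmission.

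For (ii), once the BFS tree is in place the termination phase is a time-aware version of the classical broadcast/converge-cast of~\cite{DS80}. Within $2\Delta$ of its first reception, an informed node has had each incident edge appear twice (once for its own informing round, once for the affiliation reply from any new child it spawned), so at that point it knows exactly which of its neighbors it parents. A node with no recorded children concludes it is a leaf and sends a notification up; an internal node waits for a notification from each of its affiliated children and then notifies in turn. The $send\_retry$ primitive, together with Property~\ref{prop:sendappear} and the $\Delta$-recurrence, ensures both affiliation and notification messages are eventually delivered. A straightforward induction from the leaves up the tree shows that the emitter terminates within at most $(n-1)\Delta$ additional time after the last node has been informed.

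For (iii), the information-message count is bounded by $O(m)$ exactly as in Algorithm~\ref{algo:foremost-recurrent-delta}: the local set $informedNeighbors$ prevents the same information message from being sent twice over the same incident edge in the same direction. The control messages reduce to $n-1$ affiliation messages (one per non-source node) plus $n-1$ notification messages (one per tree edge during the bottom-up phase), totalling $O(n)$. For time, the tree depth is at most $n-1$, so the information reaches the last node by date $t+(n-1)\Delta$; the subsequent $2\Delta$ leaf-detection plus the at most $(n-1)\Delta$ bottom-up notification chain give a total of $O(n\Delta)$.

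The main obstacle I anticipate lies in step (i), specifically in rigorously handling the boundary effects of the round structure: one must verify that a level-$(d-1)$ node really does manage to complete a $\zeta$-long information send to each of its new level-$d$ children before its effective sending window of length $\Delta-\zeta$ expires, even when the $\Delta$-recurrence is exactly tight. This demands a careful argument combining the $\Delta$-bound (each edge appears at least once in every window of that length), the minimum presence duration $\zeta$ (each appearance lasts long enough to complete a send), and the precise definition of the piggy-backed ``remaining round'', so that no race condition slips between the theoretical guarantees and the concrete sending.
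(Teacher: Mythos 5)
Your proposal is correct and follows essentially the same route as the paper's proof: rounds of duration $\Delta$ synchronized by the piggy-backed ``remaining round'' value yield a breadth-first tree, leaves self-detect after $2\Delta$, and the converge-cast gives $2(n-1)$ control messages and a $(n-1)\Delta + 2\Delta + (n-1)\Delta$ time bound. The paper's own argument is far terser (it asserts the BFS property in one sentence), so your level-by-level induction and your explicit treatment of the round-boundary/$\zeta$ issue are welcome elaborations rather than a different approach.
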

\begin{proof} 
The fact that the algorithm constructs a breadth-first 
(and thus shortest) delay-tolerant spanning tree
follows from the connectivity over time of the underlying graph and from the knowledge of the duration $\Delta$. 
The bound  on recurrence is used to enable a rounded process whereby the correct distance of each node to the emitter is detected.
The number of {\em information} messages is $2m$ as the dissemination process 
exchanges at most two messages per edge. 
The number of {\em affiliation} and {\em notification} messages are each of $n-1$ 
(one per edge of the tree). 
The time complexity for the construction of the tree is at most $(n-1)\Delta$ to reach the last node, plus $2\Delta$ at this node, plus at most $(n-1)\Delta$ to aggregate this node's notification. (The additional $\zeta$ caused by waiting affiliation messages matters only for the last round, since the construction continues in parallel otherwise.) The total is thus at most $2n\Delta$. 
\end{proof}

\begin{algorithm*}[h]
  \begin{spacing}{1}
  \begin{code}[1]
    \medskip
    \STATE \hspace{-12pt}$Edge\ parent \gets nil$
    \COMMENT {edge the information was received from (for non-emitter nodes).}
    \STATE \hspace{-12pt}$Date\ roundStart \gets +\infty$
    \COMMENT {date when this node starts informing new nodes.}
    \STATE \hspace{-12pt}$Set<$$Edge$$> children \gets \emptyset$
    \COMMENT {set of children from which a notification is expected.}
    \STATE \hspace{-12pt}$Integer\ nbNotifications \gets 0$
    \COMMENT {number of children that have sent their notification.}
    \STATE \hspace{-12pt}$Set$$<$$Edge$$> informedNeighbors \gets \emptyset$
    \COMMENT {set of neighbors \underline{known} to have the info.}
    \STATE \hspace{-12pt}$Status\ myStatus \gets \neg {\tt informed}$
    \COMMENT {status of the node (informed or non-informed).}

    \codetitle{\textbf{\textit{initialization}}}
    \IF {$isEmitter()$}
    \STATE $roundStart \gets now()$
    \ENDIF

    \codetitle{\textbf{\textit{onAppearance}} of an edge $e$}
    \IF {$myStatus == {\tt informed}$}
    \IF {$e \notin informedNeighbors$}
    \STATE $send(roundStart - now() + \Delta - \zeta)\ on\ e$
    \COMMENT {time until end of round.}
    \STATE $informedNeighbors \gets informedNeighbors \cup \{e\}$
    \COMMENT {(see Prop.~\ref{prop:sendappear}).}
    \ENDIF
    \ENDIF
    
    \codetitle{\textbf{\textit{onReception}} of a message $msg$ from an edge $e$}
    \IF {$msg.type==Duration$}
    \STATE $informedNeighbors \gets informedNeighbors \cup \{e\}$
    \IF {$parent == nil$}
    \STATE $parent \gets e$
    \STATE $roundStart \gets now()+msg$
    \STATE $send\_retry(affiliation)\ on\ e$
    \ENDIF
    \ELSIF {$msg.type==Affiliation$}
    \STATE $children \gets children \cup \{e\}$
    \ELSIF {$msg.type==Notification$}
    \STATE $nbNotifications \gets nbNotifications + 1$
    \IF {$nbNotifications == |children|$}
    \IF {$\neg isEmitter()$}
    \STATE $send\_retry(notification)\ on\ parent$
    \ENDIF
    \STATE $terminate$
    \ENDIF
    \ENDIF

    \codetitle{\textbf{\textit{when}} $now()==roundStart$}
    \STATE $myStatus \gets {\tt informed}$
    \STATE $send(\Delta$-$\zeta)\ on\ I_{now()}\smallsetminus informedNeighbors$
    \COMMENT {nodes that receive this and have\\ \hfill  no parent yet will take this node as parent and wait $\Delta$-$\zeta$ before informing new nodes.}

    \codetitle{\textbf{\textit{when}} $now()==roundStart+2\Delta$}
    \COMMENT {tests whether the underlying node is a leaf.}
    \IF {$|children|==0$}
    \STATE $send\_retry(notification)\ on\ parent$
    \ENDIF
  \end{code}
  \end{spacing}
  \caption{\label{algo:shortest-delta}Shortest broadcast in $\cal{B}$, knowing a bound $\Delta$ on the recurrence.}
\end{algorithm*}

\newcommand{\vtick}[1]{
  \path (#1)+(0,.15) coordinate (tmp);
  \draw (tmp)+(0,-.3)--(tmp);
}
\begin{figure*}[h]
\centering
\begin{tikzpicture}[yscale=1]
  \path (-1,1) coordinate (a);
  \path (a)+(11,0) coordinate (aend);
  \path (-1,0) coordinate (b);
  \path (b)+(11,0) coordinate (bend);
  \path (a) node[left] {$a$};
  \path (b) node[left] {$b$};
  \path (0,1) node[above, yshift=6pt] {$roundStart$};
  \path (3,1) node[above, yshift=4pt] {$now()$};
  \path (8,1) node[above, yshift=4pt] {$roundStart+\Delta$};
  \path (8,0) node[above, yshift=5pt] {$roundStart$};
  \draw[<->, semithick] (3,-.2)-- node[below, midway] {$\zeta$} (4,-.2);
  \draw (a)--(aend);
  \draw (b)--(bend);
  \vtick{0,1};
  \vtick{3,1};
  \vtick{8,0};
  \vtick{8,1};
  \draw[dashed] (3,1)--(3,0);
  \draw[->, thick] (3,1)--(4,0);
  \draw[fill=gray, style=nearly transparent] (4,-.06) rectangle (8,.06);
\end{tikzpicture}
\caption{\label{fig:roundstart} Propagation of the rounds of duration $\Delta$. }
\end{figure*}
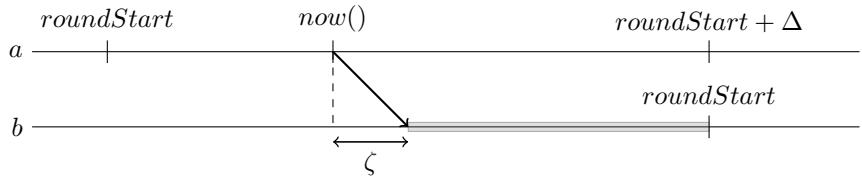

\FloatBarrier

\paragraph{Reusability for subsequent broadcasts:}
Thanks to the fact that shortest trees remain shortest regardless of the emission date, all subsequent broadcasts can be performed within the same, already known tree, which reduces the number of information message from $O(m)$ to $O(n)$. Moreover, if the depth $d$ of the tree is detected through the first notification process, then all subsequent broadcasts can enjoy an implicit termination detection that is itself optimal in time (after $d \Delta$ time). No control message 
is needed.

\subsubsection{Knowledge of  $n$ and $\Delta$}

When both $n$ and $\Delta$ are known,
one can apply the same dissemination procedure as in Algorithm \ref{algo:shortest-delta} combined with an implicit termination detection that avoids using control messages at all. Indeed, each node learns (and possibly informs) all of its neighbors within $\Delta$ time. Since the underlying graph is connected, the whole process must then complete within $n\Delta$ time. Hence, if the emitter knows both $\Delta$ and $n$, it can simply wait $n\Delta$ time, then terminate implicitly.

\begin{theorem}
  When $n$ and $\Delta$ are  known, \shortest can be solved in $\cal{B}$  exchanging $O(m)$ info. messages and  no control messages, in $O(n\Delta)$ time.
\end{theorem}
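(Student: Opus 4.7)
The plan is to reuse the dissemination phase of Algorithm~\ref{algo:shortest-delta} verbatim, and to replace the affiliation/notification exchange with a purely time-based implicit termination at the emitter. Concretely, every node still forwards the information only within its own $\Delta$-round (computed from the $roundStart$ value it receives from its parent), but all \emph{Affiliation} and \emph{Notification} messages are suppressed, and the emitter simply enters its terminal state after a locally measured timeout.

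First I would re-establish shortest-path correctness of the dissemination. This is independent of the control-message machinery, so the same round-based argument used in the proof of Theorem~\ref{th:shortest-B} applies: a node at footprint-BFS-distance $i$ from the emitter receives the information during the interval $[t_0 + i\Delta,\, t_0 + (i+1)\Delta - \zeta]$, and receives it first along an edge on a shortest delay-tolerant path. Since the footprint is connected and has $n$ nodes, its BFS depth from the emitter is at most $n-1$, so every node is informed by time $t_0 + (n-1)\Delta$.

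Next I would set the emitter's timeout to $n\Delta$ (a slack of $\Delta$ comfortably absorbs the $\zeta$ rounding), so termination is finite and certainly occurs after every node has been informed. For the complexity counts, the information-message count is inherited unchanged from Algorithm~\ref{algo:shortest-delta} and is bounded by $2m = O(m)$; affiliation and notification messages are removed, giving zero control messages. The total elapsed time is the timeout itself, $O(n\Delta)$, matching the claim.

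The main (and essentially only) obstacle is the off-by-one bookkeeping on the timeout: one has to verify that the round-based dissemination really does reach every node within $(n-1)\Delta$ in the worst case, namely a path-like footprint whose edges reappear as late as the recurrence bound permits, and that the $\zeta$ gap between the start of a round and the first forwarding on a newly appeared edge does not push the last reception past the chosen bound. Once this routine check is carried out, no control traffic is required and the theorem follows.
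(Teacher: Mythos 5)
Your proposal matches the paper's own argument: reuse the dissemination phase of Algorithm~\ref{algo:shortest-delta}, drop all affiliation and notification messages, and have the emitter terminate implicitly after waiting $n\Delta$ time, with the $O(m)$ information-message bound carried over unchanged. The only quibble is a harmless off-by-one in your round indexing (a node at BFS-depth $i$ is informed during its parent's round $[t_0+(i-1)\Delta,\,t_0+i\Delta]$, not $[t_0+i\Delta,\,t_0+(i+1)\Delta-\zeta]$), which if anything makes your $n\Delta$ timeout more conservative and does not affect the conclusion.
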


However, such a strategy would prevent the emitter from learning the depth $d$ of the shortest tree, and thus prevent lowering the termination bound to $d\Delta$ time. An alternative solution would be to achieve  explicit 
termination for the first broadcast in order to build a reusable broadcast tree 
(and learn its depth $d$ in the process). 
In  this case,  dissemination is achieved   with $O(m)$ 
information messages,
termination detection is achieved similarly to  Algorithm \ref{algo:shortest-delta}    with $O(n)$
control messages (where however affiliation messages are not necessary,
 and the number of control messages would decrease to $n-1$). 
 In this way  we would have an increase in control messages, but the  
 subsequent broadcasts  could reuse
the broadcast tree for dissemination with $O(n)$ information messages,
and termination detection could be implicit with no exchange of control message at all
after $d\Delta$ time. 
The choice of either solution may depend on the size of an information message and on the expected number of broadcasts planned.

\subsection{\shortest in  $\cal{P}$}

Feasibility in $\PG_\Delta$  is  implied by  feasibility in $\BG_\Delta$  and  it clearly implies feasibility in  $\PG_p$.
The only case left to study is   feasibility with only knowledge of $n$.


\begin{theorem}
  \shortest is not feasible in \PG knowing only $n$.
  \label{th:neg-shortest-P}
%
\end{theorem}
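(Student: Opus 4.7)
The plan is to prove the impossibility by contradiction. Suppose that some algorithm $\mathcal{A}$ solves $\shortest$ in $\PG$ using only the knowledge of $n$. My strategy is to exhibit two periodic graphs on the same number $n$ of vertices whose executions of $\mathcal{A}$ coincide long enough for $\mathcal{A}$ to have already terminated, yet whose shortest distances from a fixed source to a fixed destination differ.

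Concretely, I would start from a simple witness $\mathcal{H}\in\PG$ on $n$ vertices whose footprint is the path $u-x-v$ (padded by $n-3$ pendant leaves attached to $u$ when $n>3$), with every edge permanently present (period $1$). In $\mathcal{H}$ the shortest distance from $u$ to $v$ equals $2$, so $\mathcal{A}$ run from source $u$ starting at some date $t_0$ must deliver the information to $v$ through a $2$-hop journey; by the termination requirement, the source must enter its terminal state at some finite time $T$. I would then build a companion graph $\mathcal{H}'$ identical to $\mathcal{H}$ in every respect except that its footprint carries one additional edge $(u,v)$ whose presence intervals are $[T+1,\,T+1+\zeta],\,[T+1+p',\,T+1+p'+\zeta],\dots$ for some period $p'>\zeta$. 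The resulting $\mathcal{H}'$ lies in $\PG$ (with global period $p'$) and has the same $n$ vertices as $\mathcal{H}$.

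The key observation is that the extra edge $(u,v)$ of $\mathcal{H}'$ is absent throughout $[t_0,T]$, so no node can detect any discrepancy between $\mathcal{H}$ and $\mathcal{H}'$ on that interval; a straightforward induction on the events (appearance, disappearance, message reception) observed by the nodes shows that $\mathcal{A}$ executes identically on the two graphs up to time $T$. In particular, $\mathcal{A}$ also halts at $T$ in $\mathcal{H}'$ with $v$ having received the information through the very same $2$-hop journey used in $\mathcal{H}$. But in $\mathcal{H}'$ the $1$-hop journey that waits at $u$ until time $T+1$ and then traverses the newly added edge $(u,v)$ is a valid journey of departure $\geq t_0$, so $d_{u,t_0}(v)=1$; the reception of the information via a $2$-hop journey therefore violates the shortest requirement, contradicting the correctness of $\mathcal{A}$.

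The main technical subtlety, which I expect to be mild, is verifying both the periodicity of $\mathcal{H}'$ and the indistinguishability of the two executions on $[t_0,T]$ from the viewpoint of every node. The former is immediate from the construction; the latter rests on the observation that the only discrepancy between the two graphs is an edge that is never present during $[t_0,T]$, so no appearance or disappearance event involving it can be raised at any node before $\mathcal{A}$ has already terminated.
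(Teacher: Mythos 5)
Your proposal is correct and follows essentially the same route as the paper's proof: run the algorithm on a witness graph until the source terminates at time $T$, then add to the footprint an edge $(u,v)$ whose first appearance is after $T$, so that the two executions are indistinguishable on $[t_0,T]$ while the topological distance from $u$ to $v$ drops to $1$. One small fix: for $\mathcal{H}'$ to actually be periodic you must take the period $p'$ larger than the entire prefix $T+1+\zeta-t_0$ (the paper says ``take the period as large as needed''), not merely $p'>\zeta$; otherwise periodicity would force the new edge to be present already at $T+1-p'\ge t_0$, destroying the indistinguishability on which the argument rests.
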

\begin{proof}
By contradiction, let ${\cal A}$ be an algorithm that solves  \shortest 
in  $\PG$ with the knowledge of $n$ only.
 Consider an arbitrary  $\G=(V,E,\lifetime,\rho)\in  \PG$ and $x\in V$. Execute
  ${\cal A}$ in $\G$  starting at time $t_0$ with $x$ as the source.  
  Let $t_f$  be the time when
 the source terminates and $T$ the shortest broadcast tree along which broadcast was performed.
 Let $\G'=(V',E',\lifetime',\rho')\in \PG$ such that
 $V'=V$,
 $E'=E\cup\{(x,v)$ for some $v \in V : (x,v) \notin E\}$,
 $\rho' (e,t)=\rho(e,t)$ for all $e\in E, 0 \le t \le t_f$,
 $\rho'((x,v),t) = 0$ for all
  $ t_0\leq t <  t_f$, and  $\rho'((x,v),t) = 1$ for some $ t >
t_f$ (we can take the period $p$ as large as needed here). 
Consider the execution of ${\cal A}$ in $\G'$  starting at time
$t_0$ with $x$ as the source.
 Since  $(x,v)$ does not appear between $t_0$ and $t_f$, the execution of
$ {\cal A}$ at every node in $\G'$ will be exactly as at the corresponding node in $\G$
and terminate with   $v$ having received the information 
in more than one hop, contradicting the fact that $T$ is a shortest tree, and thus the correctness of ${\cal A}$.
\end{proof}

\section{\fastest}

The requirement for  \fastest  is to deliver the information to each node using the least amount of time, regardless of the stating date, and to have the emitter detect termination within finite time. 

We first show  that, contrary to the foremost and shortest cases, knowing $n$ and $\Delta$  is insufficient to perform a fastest broadcast even in $\PG$.

\begin{theorem}
 \fastest is not feasible in \PG  with  only knowledge  of $n$  and $\Delta$.
\label{th:neg-fastest-n-delta}
\end{theorem}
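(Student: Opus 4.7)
The plan is to reuse the infinite family $\{{\cal G}_i\}$ from Section \ref{sec:basic}, for which $n=4$ and $\Delta=4$ are common to every member, and build on the indistinguishability already captured by Lemma \ref{newlemma}. I will show that the fastest broadcast duration from $u$ differs strictly between ${\cal G}_0$ and every ${\cal G}_i$ with $i \ge 1$, which contradicts feasibility.

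First, I would verify that the fastest broadcast from $u$ in ${\cal G}_0$ has duration exactly $3$. The footprint admits only two $u$-to-$v$ paths: $\{e_1, e_3\}$ via $x$ and $\{e_2, e_4\}$ via $y$. Starting at any $t$ with $[t]_4 = 0$, $e_1$ and $e_2$ are present so $x$ and $y$ are informed at $t+1$; $e_3$ then lets $v$ receive at $t+3$, whereas the $e_4$ branch would deliver only at $t+4$. Because the schedule of ${\cal G}_0$ has period $4$, no choice of departure can beat this, so the minimum broadcast duration is $3$.

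Second, I would show that in ${\cal G}_i$ for $i \ge 1$ the fastest duration is $2$. Taking $t^* = 4i$ gives $[t^*]_4 = 0$, so $e_1$ and $e_2$ are present and $x, y$ are both informed at $t^*+1$; and the \emph{extra} appearance of $e_4$ during $[t^*+1, t^*+2]$, guaranteed by $[t^*+1]_{4(i+1)} = 4i+1$, lets $y$ forward the message to $v$ at time $t^*+2$. The whole broadcast thus completes in $2$ time units, which matches the trivial lower bound $2\zeta = 2$ since any $u$-to-$v$ journey has at least two hops.

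With these two facts in hand, the impossibility follows by the same indistinguishability argument used throughout Section \ref{sec:basic}. Suppose for contradiction that an algorithm ${\cal A}$ solves \fastest in $\PG$ knowing only $n$ and $\Delta$. Run ${\cal A}$ in ${\cal G}_0$ with source $u$ and let $t_f$ be its termination date; by correctness, ${\cal A}$ achieves duration $3$. Choose $i$ with $4i+1 > t_f$: then no extra appearance of $e_4$ occurs in $[0, t_f]$, so ${\cal G}_0$ and ${\cal G}_i$ expose identical histories on that interval. Hence ${\cal A}$ executes identically in ${\cal G}_i$, again terminating at $t_f$ with duration $3$, contradicting the fact that the fastest duration in ${\cal G}_i$ is $2$. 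The main obstacle is the careful timing analysis that pins down the exact fastest duration in each ${\cal G}_i$; the indistinguishability step then reduces, in spirit, to Lemma \ref{newlemma}.
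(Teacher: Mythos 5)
Your proposal is correct and follows essentially the same route as the paper: it uses the same family $\{{\cal G}_i\}$, the same key fact that the fastest $u$-to-$v$ duration is $3$ in ${\cal G}_0$ versus $2$ in every ${\cal G}_i$ with $i\ge 1$, and the same indistinguishability argument (which you carry out inline rather than citing Lemma~\ref{newlemma}, and whose choice of $i$ with $4i+1>t_f$ matches the paper's condition $i>\frac{t_f-1}{4}$). Your explicit verification of the two fastest durations, which the paper merely asserts, is a welcome addition but not a different approach.
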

\begin{proof}
Consider the  infinite family   $\{{\cal G}_i\}$ of TVGs defined in Section \ref{sec:basic}.
Notice that
the duration of the fastest journey from $u$ to $v$  is 3 in    ${\cal G}_0$ while it is 2
in any other ${\cal G}_i$.
By Lemma \ref{newlemma},  {\sc Test}
 is undecidable even if both $n$ and $\Delta$ are known. 
 It follows that it is undecidable whether the fastest journey from $u$ to $v$ 
 has length 3 (in the case of ${\cal G}_0$)  or 2 (for all other ${\cal G}_i$).
 \end{proof}

The next question is what knowledge allows the problem to become feasible in \PG.
Observe that $\{ {\cal G}_i\} \setminus p$ is a finite set, hence Theorem \ref{lemma} does not apply.
Indeed, it has been shown in \cite{CFMS14} that, if  the   period $p$ is known, \fastest becomes feasible in \PG.

\begin{theorem}[from \cite{CFMS14}]
    \fastest is feasible in \PG with a known period $p$, and the corresponding broadcast tree can be reused as such in the subsequent periods.
  \label{th:fastest-P}
\end{theorem}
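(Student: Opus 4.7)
The plan is to reduce fastest broadcast in $\PG_p$ to a finite optimization problem over a single period, exploiting the exact periodicity of the schedule.

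First, I would note the following shift-invariance: since $\rho(e,t) = \rho(e,t+kp)$ for every edge $e$ and positive integer $k$, any journey $\J$ with departure $t_1$ and arrival $t_1+\ell$ induces a journey $\J'$ with departure $t_1+kp$ and arrival $t_1+kp+\ell$ of the same temporal length. Consequently, the set of achievable fastest-journey durations from the emitter $x$ to any other node is entirely determined by the departure dates lying in a single window $[t_0,t_0+p)$. Moreover, only finitely many departure dates inside this window can be \emph{critical}, namely the $O(mp)$ edge-appearance instants within one period; between two consecutive such instants the foremost arrival date is an affine function of the start date, so the minimum temporal length is achieved at one of these breakpoints.

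Second, I would have each node learn its own incident schedule over one full period: by waiting $p$ time units after the algorithm is launched, every node has observed one complete cycle of each of its incident edges (by periodicity and by $\RG \supseteq \PG$). The emitter then triggers a schedule-collection phase, piggybacking these local schedules on a foremost-style dissemination in $\PG_p \subseteq \BG_\Delta$ with $\Delta = p$; this is feasible by the already established foremost algorithm of Section~\ref{sec:foremost-recurrent-delta} together with upward aggregation on the converge-cast tree. Once the emitter holds the full periodic description of $\G$, it can centrally compute, for each candidate departure date $t^* \in [t_0,t_0+p)$, the foremost arrival time at each node and extract a departure date $t^\star$ together with a journey tree $T$ that minimize $\max_v \bigl(arrival_v - t^\star\bigr)$.

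Third, the emitter schedules the broadcast along $T$ at departure $t^\star$ (waiting at most $p$ time), inserting the precomputed forwarding dates into the information message so that each relay forwards on the prescribed edge at the prescribed date. Termination detection can be attached as in Algorithm~\ref{algo:shortest-delta} or, more simply, detected implicitly once the precomputed worst-case duration has elapsed at the emitter.

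Finally, reusability follows directly from periodicity: the tree $T$ and the departure date $t^\star$ remain optimal for every time $t^\star+jp$, so for any subsequent broadcast the emitter just waits for the next occurrence of $t^\star \bmod p$ and replays the same schedule on the same hierarchy; no new collection phase is required. The main obstacle I foresee is the schedule-collection step: ensuring that the locally observed $p$-long snippets can be stitched into a globally consistent schedule requires that nodes agree on a common time reference (or transmit their observations with absolute timestamps) and that the gathering phase terminates before the planned departure date $t^\star$; both points rely essentially on the knowledge of $p$, which bounds the waiting time and guarantees that one period of observation suffices.
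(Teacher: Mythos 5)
Your argument is correct in substance but takes a genuinely different route from the one the paper relies on. The paper does not prove this theorem itself: it imports it from \cite{CFMS14}, and the construction there (sketched in the paper's proof of the subsequent theorem on knowing only an upper bound on $p$) rests on the reduction that a fastest broadcast from the emitter is exactly a \emph{foremost} broadcast launched at the date, modulo $p$, that minimizes the emitter's temporal eccentricity; the algorithm distributedly learns only that single date over one period and then runs the already-established foremost procedure from it, so the only global quantity ever aggregated is one scalar. You instead centralize the entire periodic schedule at the emitter and solve the offline problem of \cite{BFJ03} there, then source-route the broadcast with precomputed forwarding dates. Both approaches are sound; yours buys a clean finite optimization and direct reuse of known offline machinery, at the cost of information messages whose size grows with the schedule description and of the clock-alignment issue you flag. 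That issue is resolvable in this model because $\zeta$ is constant and known, so each hop lets a node compute its offset from its parent's clock; and your finitely-many-breakpoints argument does attain its minimum because presence intervals are closed and of length at least $\zeta$, though you should state that explicitly since in continuous time a piecewise-affine eccentricity function with half-open pieces could otherwise fail to achieve its infimum. Your reusability argument is the same shift-invariance by $p$ that the paper invokes.
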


We can actually show that  \fastest is feasible in \PG with the weaker knowledge of  an upper bound  on the period.

\begin{theorem}
    \fastest is feasible in \PG with a known upper bound on $p$, and the broadcast tree can be reused for subsequent broadcasts.
\end{theorem}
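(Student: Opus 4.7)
\textit{Proof plan.} The plan is to reduce to Theorem~\ref{th:fastest-P} by determining the actual period $p$ from the known upper bound $p^*$ in a preliminary phase, and then running the CFMS14 algorithm. Since $\PG \subset \BG$ and $p^*$ is a valid recurrence bound, the preliminary phase can use all the $\BG_\Delta$ machinery of Section~\ref{sec:shortest-delta} with $\Delta = p^*$.

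Concretely, the emitter first triggers, via a shortest broadcast in $\BG_{p^*}$ (feasible by Theorem~\ref{th:shortest-B}), an observation phase in which every node records the presence function of each of its incident edges over a window of length $p^*$. Since $p \leq p^*$, at least one full period of every edge pattern fits inside this window, so each node can locally determine the minimum local period $p_e$ of every incident edge $e$. These local values are then gathered at the emitter via the convergecast of the same mechanism. From them the emitter computes $p = \operatorname{lcm}\{p_e : e \in E\}$: each $p_e$ divides the global $p$, and by minimality $p$ is the smallest common multiple, which satisfies $p \leq p^*$. The emitter then broadcasts $p$ and invokes the algorithm of Theorem~\ref{th:fastest-P}, inheriting both feasibility of \fastest and reusability of the resulting tree.

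The main obstacle is formalising the $\operatorname{lcm}$ step in continuous time, where the local periods are positive reals rather than integers. Since by hypothesis the global minimum period $p$ exists and every $p_e$ divides $p$, the $p_e$'s are pairwise commensurable and $\operatorname{lcm}$ is well-defined as a positive real. Alternatively, the emitter can bypass the $\operatorname{lcm}$ computation by extending the observation window to $2p^*$ and extracting $p$ directly as the smallest positive shift under which the aggregated global pattern is self-invariant within the window---this additional observation is harmless to the feasibility statement, which does not require the preliminary phase to be optimal.
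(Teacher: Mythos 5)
Your strategy is sound but genuinely different from the paper's, and considerably heavier. The paper does not attempt to recover $p$ at all: it observes that the algorithm behind Theorem~\ref{th:fastest-P} only needs to learn the date of minimum temporal eccentricity over one complete period, and since any window of length $p^+\ge p$ necessarily contains a complete period, that algorithm runs verbatim with the upper bound substituted for $p$. Your reduction --- first reconstruct the exact period from the bound, then invoke the known-$p$ result as a black box --- buys a cleaner modular structure and the exact value of $p$ as a by-product, at the cost of an extra observation-and-convergecast phase and of having to justify the period-reconstruction step, which is exactly where the trouble is.

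That step, as stated in your main line, has a genuine gap: having one full period of an edge's pattern inside a window of length $p^*$ is \emph{not} sufficient to determine its minimal local period $p_e$. A shift $s<p_e$ that is self-consistent on a window of length $L$ is only forced to be a true period when $L\ge p_e+s$; with $L=p^*$ and $s$ close to $p_e$ this fails. Concretely, take $p=p^*=p_e=10$ and an edge present exactly on $[0,1]\cup[9,10]$ within each period: on the window $[0,10]$ the shift $s=9$ is only tested on $t\in[0,1]$ and passes, while every smaller shift fails, so the node reports $p_e=9$ even though $\rho(1.5)\ne\rho(10.5)$; the subsequent $\operatorname{lcm}$ is then wrong. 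Your own fallback --- observing over $2p^*$ and taking the smallest self-consistent shift of the global pattern --- repairs this, since then $L\ge p+s$ holds for every candidate $s<p$, so any self-consistent shift extends to a global period and the minimum found is exactly $p$. That version (together with the remark that an always-present edge has no minimal period and must be excluded from any per-edge $\operatorname{lcm}$) should be promoted from an aside to the actual argument; or, more simply, drop the reconstruction entirely and argue as the paper does.
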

 \begin{proof}
   The construction of fastest (broadcast) trees in~\cite{CFMS14} is based on the observation that achieving a fastest broadcast from some node comes to performing a {\em foremost} broadcast from that node at the time of its minimum {\em temporal eccentricity}, that is, when it takes the minimum time to reach all other nodes. The algorithm in~\cite{CFMS14} consists of learning this date over a complete period $p$, then build a foremost broadcast tree that starts at that date (modulo $p$). In case of multiplicity, any of the minimum values qualifies just as well. Having an upper bound $p^+$ on the period $p$ allows for the exact same technique. Indeed, any interval of time of length $p^+$ must contain an interval of length $p$, thus the time of minimum eccentricity will be detected. 
 \end{proof}
 
Note that in both cases (knowing $p$ or $p+$), the broadcast tree that is built remains necessarily optimal in the future, since in \PG the network schedule repeats forever. It can thus be {\em memorized} for subsequent broadcasts, {\it i.e.,} the solution is {\em reusable}.

\section{Computational Relationship}

On the basis of this paper results, we can prove   the validity of Equation~1 by showing the existence of
a strict computational hierarchy between $\PP(\RG_n)$, $\PP(\BG_\Delta)$, and $\PP(\PG_p)$.

\begin{theorem}
  $\PP(\RG_n)\subsetneq\PP(\BG_\Delta)\subsetneq\PP(\PG_p)$
  \label{th:strict-inclusion}
\end{theorem}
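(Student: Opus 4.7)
The non-strict inclusions are already given by Theorem~\ref{th:inclusion}, so the plan is to witness each strict inclusion by exhibiting a \tdbroadcast variant that sits in the larger class but not in the smaller one. Concretely, I would use \shortest as the separating problem for $\PP(\RG_n) \subsetneq \PP(\BG_\Delta)$, and \fastest as the separator for $\PP(\BG_\Delta) \subsetneq \PP(\PG_p)$. All the technical work has already been done in the preceding sections; what remains is to assemble these pieces and handle one small subtlety about how feasibility and unfeasibility lift between classes with different knowledge.

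For the first inclusion, membership of \shortest in $\PP(\BG_\Delta)$ is exactly Theorem~\ref{th:shortest-B}. To show \shortest is not in $\PP(\RG_n)$, I would argue by contraposition: if an algorithm ${\cal A}$ solved \shortest in every $\G \in \RG$ given the knowledge $n$, then since $\PG \subset \RG$, the same ${\cal A}$ would solve \shortest in every $\G \in \PG$ given $n$, contradicting Theorem~\ref{th:neg-shortest-P}. Hence \shortest witnesses $\PP(\RG_n) \subsetneq \PP(\BG_\Delta)$.

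For the second inclusion, membership of \fastest in $\PP(\PG_p)$ is Theorem~\ref{th:fastest-P}. For non-membership in $\PP(\BG_\Delta)$, a two-step lifting is required: any algorithm solving \fastest in every $\G \in \BG$ from knowledge $\Delta$ would, by $\PG \subset \BG$, also solve it in every $\G \in \PG$ from $\Delta$ alone; and since strengthening the knowledge from $\Delta$ to $\{n,\Delta\}$ can only broaden the class of solvable problems, this would yield feasibility of \fastest in $\PG$ with knowledge $\{n,\Delta\}$, contradicting Theorem~\ref{th:neg-fastest-n-delta}. Hence \fastest witnesses $\PP(\BG_\Delta) \subsetneq \PP(\PG_p)$.

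The only subtle point worth flagging, and the one spot where a careless argument could slip, is the directional asymmetry between class inclusion and knowledge inclusion: going to a \emph{smaller} class of graphs but to a \emph{larger} set of knowledge both reduce the problem-solving challenge, so the two lifts have to be performed in the right direction. Beyond that observation, the proof is essentially a bookkeeping exercise combining the feasibility and undecidability results of Sections~4--6.
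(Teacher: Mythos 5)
Your proposal is correct and follows essentially the same route as the paper: both separate the first inclusion with \shortest (Theorems~\ref{th:shortest-B} and~\ref{th:neg-shortest-P}) and the second with \fastest (Theorems~\ref{th:fastest-P} and~\ref{th:neg-fastest-n-delta}), lifting unfeasibility from $\PG$ to the larger classes and from the stronger knowledge $\{n,\Delta\}$ down to $\Delta$. Your explicit remark on the direction of the two lifts is a useful clarification of a step the paper leaves implicit, but it is not a different argument.
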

\begin{proof}
  The fact that $\PP(\RG_n)\subseteq\PP(\BG_\Delta)\subseteq\PP(\PG_p)$ was observed in Theorem~\ref{th:inclusion}. To make the left inclusion strict, one has to exhibit a problem $P$ such that $P \in \PP(\BG_\Delta)$ and $P \notin \PP(\RG_n)$. By Theorem~\ref{th:neg-shortest-P} and Theorem~\ref{th:shortest-B}, \shortest is one such example. The right inclusion is similarly proven strict, based on the fact that \fastest is in $\PP(\PG_p)$ (Theorem~\ref{th:fastest-P}) but it is not in 
  $\PP(\PG_{\Delta,n})$  and thus in $\PP(\BG_{\Delta})$  (Theorem~\ref{th:neg-fastest-n-delta}).
\end{proof}

With regards to the {\em feasibility} of the three   broadcast problems investigated here, the results established in the previous sections
indicate the existence of a strict hierarchy. 
Given a class ${\cal C}$ of TVGs, a knowledge $K$, and two problems $P_1, P_2 $,
 let $P_1 \preceq_{{\cal C}_K} P_2$  denote the  fact that for all ${\cal G}\in {\cal C}$, if  $P_2$ is feasible in ${\cal G}$
 with knowledge $K$,  so is $P_1$.

Given a (possibly infinite) set ${\cal L} = \{ {\cal C}^1_{K_1}, {\cal C}^2_{K_2} \ldots \}$ of classes of TVGs with given knowledge, let
$P_1 \preceq_{\cal L} P_2$    denote  the fact that  $P_1 \preceq_{{\cal C}_{K}} P_2$ for all $ {\cal C}_{K}  \in {\cal L}$. 
Let  $P_1 \prec_{\cal L} P_2$ denote the fact that    $P_1 \preceq_{\cal L} P_2$ and there exists at least one   ${\cal C}_K   \in {\cal L}$
such that  $P_1 \in \PP({\cal C}_K )$   but $P_2 \notin  \PP({\cal C}_K )$.
  
 Let ${\cal U} = \{ \RG_n, \BG_n, \BG_\Delta,\BG_{\{n,\Delta\}},  \PG_n,   \PG_\Delta,   \PG_{\{n,\Delta\}},  \PG_p \} $ be the universe of all the classes and knowledge considered in this paper.
 
 \begin{theorem}
$\foremost \prec_{\cal U} \shortest \prec_{\cal U} \fastest$
  \label{th:strict-inclusion2}
\end{theorem}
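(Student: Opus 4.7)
\begin{proof-sketch}
The plan is to split the statement into the two inequalities and, for each, establish both the $\preceq_{\cal U}$ part and strict separation by exhibiting a witness class. All ingredients are already available from earlier theorems, so the work is a careful case analysis over the elements of ${\cal U}$.

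I first handle $\foremost \preceq_{\cal U} \shortest$. The plan is to inspect, for each ${\cal C}_K \in {\cal U}$, whether \shortest is feasible and, if so, verify that \foremost is feasible there as well. From Theorem~\ref{th:neg-shortest-P} (and the inclusions $\RG \supset \BG \supset \PG$), \shortest is infeasible in $\RG_n$, $\BG_n$, and $\PG_n$; so the claim is vacuous on those classes. On the remaining classes $\BG_\Delta$, $\BG_{\{n,\Delta\}}$, $\PG_\Delta$, $\PG_{\{n,\Delta\}}$, $\PG_p$, I invoke Algorithm~\ref{algo:foremost-recurrent-delta} (feasibility of \foremost in $\BG_\Delta$) to conclude that \foremost is feasible in every one of them, since each contains at least the knowledge $\Delta$ (taking $\Delta = p$ in $\PG_p$). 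For strictness, I exhibit $\RG_n \in {\cal U}$: Theorem~\ref{theo:foremost-R-n-known} gives \foremost in $\RG_n$, whereas Theorem~\ref{th:neg-shortest-P} (which already rules out \shortest in $\PG_n \subset \RG_n$) rules out \shortest in $\RG_n$ a fortiori.

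I next handle $\shortest \preceq_{\cal U} \fastest$ in the same style. By Theorem~\ref{th:neg-fastest-n-delta}, \fastest is infeasible in $\PG_{\{n,\Delta\}}$, hence in every class of ${\cal U}$ whose knowledge is a subset of $\{n,\Delta\}$, which covers all of ${\cal U}$ except $\PG_p$. So the only class where the implication is non-vacuous is $\PG_p$, and Theorem~\ref{th:shortest-B} (feasibility of \shortest in $\BG_\Delta$, applied with $\Delta = p$) shows that \shortest is feasible there. For strictness, I take $\BG_\Delta \in {\cal U}$: \shortest is feasible in $\BG_\Delta$ by Theorem~\ref{th:shortest-B}, while \fastest is not feasible there since infeasibility in $\PG_{\{n,\Delta\}} \subset \BG_\Delta$ (Theorem~\ref{th:neg-fastest-n-delta}) clearly implies infeasibility in $\BG_\Delta$ (which has strictly less structural knowledge).

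Combining the two chains yields the claimed strict hierarchy. The only subtle point is the implicit use of the monotonicity principle ``more knowledge / more restricted class $\Rightarrow$ more problems solvable'' when transferring feasibility and infeasibility between elements of ${\cal U}$; I expect this bookkeeping on the eight-element universe to be the main (mild) obstacle, but it reduces to the inclusions $\PG \subset \BG \subset \RG$ together with the fact that $p$ serves as a valid $\Delta$ and, within $\BG_\Delta$, one can infer $n$ as shown in Theorem~\ref{th:inclusion}.
\end{proof-sketch}
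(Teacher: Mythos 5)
Your proposal is correct and follows essentially the same route as the paper: the same decomposition into two strict inequalities, the same case analysis over $\cal U$ using Theorems~\ref{theo:foremost-R-n-known}, \ref{th:inclusion}, \ref{th:shortest-B}, \ref{th:neg-shortest-P}, and \ref{th:neg-fastest-n-delta}, and the same monotonicity reasoning; the only cosmetic difference is that you use $\RG_n$ as the separating witness for the first inequality where the paper cites $\PG_n$, and you verify the $\preceq$ part only on the classes where the "harder" problem is feasible rather than first showing \foremost feasible everywhere.
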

\begin{proof}
By Theorem~\ref{theo:foremost-R-n-known}, \foremost is feasible in $\RG_n$ (and {\it a fortiori} in $\BG_n$ and $\PG_n$, which are subsets of $\RG_n
$). By Theorem~\ref{th:inclusion}, it is also feasible in $\BG_\Delta$ (and in all the remaining combinations of ${\cal U}$, which are subsets of $\BG_\Delta$). Thus \foremost is feasible in ${\cal U}$ regardless of the class or knowledge considered. Now, by Theorem~\ref{th:neg-shortest-P}, \shortest is unfeasible in $\PG_n$ (among others). Thus, we have $\foremost \prec_{\cal U} \shortest$. Similarly, by Theorem~\ref{th:shortest-B}, \shortest is feasible in $\BG_\Delta$, and thus, among others, in $\PG_p$ (Theorem~\ref{th:inclusion}), whereas \fastest is only feasible in $\PG_p$ (Theorem~\ref{th:neg-fastest-n-delta}). It thus also holds that $\shortest \prec_{\cal U} \fastest$.
\end{proof}

With regards to {\em reusability}, the relationship between   the three   broadcast problems is drastically different.
For reusability, let  ${\leq}_{\cal L} $,  $<_{\cal L}$
 be the analogous of $\preceq_{\cal L}$,  $\prec_{\cal L}$ defined for feasibility.
Furthermore, let  $P_1 \equiv_{\cal L} P_2$ if   $P_1 \leq_{{\cal L}}  P_2$  and $P_2 \leq_{{\cal L}} P_1$.

 \begin{theorem}
$ \shortest <_{\cal U} \foremost \equiv_{\cal U} \fastest $
  \label{th:3}
 \end{theorem}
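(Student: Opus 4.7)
The statement bundles three comparisons over ${\cal U}$: $\foremost \leq_{\cal U} \fastest$ and $\fastest \leq_{\cal U} \foremost$ (giving $\foremost \equiv_{\cal U} \fastest$), together with $\shortest \leq_{\cal U} \foremost$ plus a strict separation witness. My plan is to first build a small catalogue of exactly which contexts of ${\cal U}$ admit reusable broadcast trees, one metric at a time, and then read off the three comparisons from that catalogue.

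For $\foremost$, the plan is to argue that the only context in ${\cal U}$ offering reusability is $\PG_p$. The positive half is Theorem~\ref{th:reuse-P-p}. For the negative half, the family $\{{\cal G}_i\}$ used in Theorem~\ref{th:no-reuse-P-ndelta} defeats reuse in every other context: all its members lie in $\PG \subset \BG \subset \RG$ and share the same $n=4$ and $\Delta=4$, so the indistinguishability delivered by Lemma~\ref{newlemma} transfers verbatim to each of $\PG_{\{n,\Delta\}}$, $\PG_\Delta$, $\PG_n$, $\BG_{\{n,\Delta\}}$, $\BG_\Delta$, $\BG_n$, and $\RG_n$. For $\fastest$, Theorem~\ref{th:fastest-P} supplies reuse in $\PG_p$, while Theorem~\ref{th:neg-fastest-n-delta} together with the inclusions inside ${\cal U}$ already rules out feasibility (and hence reusability) in every other context. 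For $\shortest$, the reusability observation following Theorem~\ref{th:shortest-B} gives reuse in $\BG_\Delta$ and, by inclusion, in every context of ${\cal U}$ that supplies the knowledge of a recurrence bound, including $\PG_p$ (where $p$ is itself a valid $\Delta$).

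With this catalogue in hand the three comparisons are immediate. Both $\foremost$ and $\fastest$ are reusable on exactly the single context $\PG_p$, which directly yields $\foremost \equiv_{\cal U} \fastest$. The set of contexts where $\shortest$ is reusable strictly contains $\PG_p$, so $\shortest \leq_{\cal U} \foremost$ holds, and $\BG_\Delta$ serves as a concrete witness of strictness: shortest is reusable there (by the observation above) while foremost is not (by the lifted negative half of the catalogue).

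The main obstacle I anticipate is the negative half of the $\foremost$ catalogue, namely transporting the indistinguishability argument of Theorem~\ref{th:no-reuse-P-ndelta} from $\PG_{\{n,\Delta\}}$ down to the seven weaker contexts of ${\cal U}$. The point to check carefully is that the family $\{{\cal G}_i\}$ already lives in $\PG$ and keeps $n$ and $\Delta$ constant across all its members, so weakening the available knowledge from $\{n,\Delta\}$ or enlarging the class from $\PG$ to $\BG$ or $\RG$ only strengthens the adversary; the rest of the proof is bookkeeping over inclusions in ${\cal U}$ and direct appeals to theorems already established.
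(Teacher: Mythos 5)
Your proposal is correct and follows essentially the same route as the paper's proof: reusability of \shortest in $\BG_\Delta$ and all stronger contexts, reusability of \foremost and \fastest in exactly $\PG_p$ (via Theorems~\ref{th:reuse-P-p}, \ref{th:no-reuse-P-ndelta}, \ref{th:fastest-P} and~\ref{th:neg-fastest-n-delta}), and $\BG_\Delta$ as the strictness witness. The only difference is that you spell out explicitly why non-reusability in $\PG_{\{n,\Delta\}}$ transfers to the seven weaker contexts, a step the paper leaves implicit.
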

\begin{proof}
\shortest enables reusability in $\BG_\Delta$ (see the end of Section~\ref{sec:shortest-delta}) and thus in all the stronger contexts $\{\BG_{\{n,\Delta\}}$, $\PG_\Delta$,$\PG_{\{n,\Delta\}}$, $\PG_p\}$. On the other hand, \foremost, although feasible in all of ${\cal U}$'s contexts, enables reusability only in $\PG_p$ (Theorem~\ref{th:no-reuse-P-ndelta} and~\ref{th:reuse-P-p}). As for \fastest, it is only feasible (and reusable) in $\PG_p$ (Theorems~\ref{th:neg-fastest-n-delta} and~\ref{th:fastest-P}).
\end{proof}

Theorems~\ref{th:strict-inclusion2} and~\ref{th:3} suggest that
the difficulty of these problems is multi-dimensional, in that it depends on the aspect that is looked at (feasibility {\it vs.} reusability). Indeed, while \shortest is harder than \foremost in terms of feasibility, it is easier in terms of reusability. On the other hand, \fastest is (among) the hardest in both terms.

\section{Concluding Remarks}

In this paper we focused on three particular problems (shortest,
fastest, and foremost broadcast) in three classes of dynamic graphs
(recurrent, time-bounded recurrent, and periodic graphs) with
different types of applicable knowledge (size of the network, bound on
edge recurrence, period, upper bound on period). By comparing the
feasibility of these problems within each class depending on the
available knowledge, we have observed the impact that knowledge has on
feasibility and we have understood the computational relationship
between the classes in this context. This has in turn allowed us to
observe the relative ``difficulty" of the problems under
investigation.

Among other things our  results show, for example, the special importance of  periodic dynamic graphs with known period, the only combination of class and knowledge (in the universe considered here) where Fastest broadcast is feasible. 
It also stresses the inherent difference between reusability of
Foremost broadcast  (which is the ``easiest" problem to solve but is reusable only in periodic graphs with known period), 
and Fastest and Shortest on the other, which can be reused whenever they can be solved.
Another interesting observation  stemming from our results is the intrinsic limitation  of knowing only the number of nodes,  in which case, regardless of the class of graphs considered, only Foremost broadcast can be performed (without  being able to reuse it).

This study is a first step toward an understanding of computability in dynamic graphs and it
 opens the door to  more general investigations on the  computability power of different classes and their relationship with  knowledge available to the nodes.

\paragraph{Acknowledgments:}
The authors would like to thank the anonymous referees. Their questions and comments have led to a stronger and clearer paper.


\end{document}